\begin{document}

\begin{frontmatter}

%% Title, authors and addresses

%% use the tnoteref command within \title for footnotes;
%% use the tnotetext command for theassociated footnote;
%% use the fnref command within \author or \address for footnotes;
%% use the fntext command for theassociated footnote;
%% use the corref command within \author for corresponding author footnotes;
%% use the cortext command for theassociated footnote;
%% use the ead command for the email address,
%% and the form \ead[url] for the home page:
%% \title{Title\tnoteref{label1}}
%% \tnotetext[label1]{}
%% \author{Name\corref{cor1}\fnref{label2}}
%% \ead{email address}
%% \ead[url]{home page}
%% \fntext[label2]{}
%% \cortext[cor1]{}
%% \address{Address\fnref{label3}}
%% \fntext[label3]{}

\title{Equivariant algorithms for constraint satisfaction problems over coset templates\tnoteref{sosna}}
\tnotetext[sosna]{Supported by 
%ERC Starting Grant ``Sosna'' and by 
the NCN grant 2012/07/B/ST6/01497.}

%% use optional labels to link authors explicitly to addresses:
%% \author[label1,label2]{}
%% \address[label1]{}
%% \address[label2]{}

\author{S{\l}awomir Lasota}
\address{University of Warsaw}

\begin{abstract}
We investigate the Constraint Satisfaction Problem (CSP) over templates with a group structure,
and algorithms solving CSP that are \emph{equivariant}, i.e.~invariant under a natural group action induced by a template.
Our main result is a method of proving the implication: if CSP over a coset template $T$ is solvable by a local equivariant algorithm then $T$ is 2-Helly (or equivalently, has a majority polymorphism). 
Therefore bounded width, and definability in fixed-point logics, coincide with 2-Helly.
Even if these facts may be derived from already known results, our new proof method has two advantages. 
First,  the proof is short, self-contained, and completely avoids referring to the omitting-types theorems.
Second, it brings to light some new connections between CSP theory and descriptive complexity theory, via
a construction generalizing CFI graphs. 
%
%We provide a short and elementary proof of the fact that bounded width implies strict width 2, for a subclass of CSP templates.
%Even if this fact may be easily derived from already known results, the new proof has three advantages. 
%First,  the proof  is short, self-contained, and completely avoids referring to the omitting-types theorems.
%Second, it brings to light some new connections between CSP theory and descriptive complexity theory, via
%a construction similar to CFI graphs. 
%Finally, the proof easily generalizes to show equivalence of all algorithms and logics that commute with the natural action
%induced by a template.
\end{abstract}

\begin{keyword}
%% keywords here, in the form: keyword \sep keyword

%% PACS codes here, in the form: \PACS code \sep code

%% MSC codes here, in the form: \MSC code \sep code
%% or \MSC[2008] code \sep code (2000 is the default)

\end{keyword}

\end{frontmatter}

%% \linenumbers

% !TEX root = main_bw_2Helly.tex

\newcommand{\slcomm}[1]{\todo[inline,color=blue!20]{#1}}

\newcounter{quotecount}
\newcommand{\MyQuote}[1]{\vspace{1cm}\refstepcounter{quotecount}%
     \parbox{7.3cm}{\em #1}\hspace*{.4cm}\normalfont(\arabic{quotecount})\\[1cm]}

\newcommand{\paragrafik}[1]{\vspace{0.1cm} \noindent {\bf #1.}}

\newcommand{\atoms}{\mathbb A}
\newcommand{\eqdef}{\stackrel {\text{def}} =}
\newcommand{\nat}{\mathbb{N}}
\newcommand{\aut}[2]{\text{Aut}({#2}/{#1})}
\newcommand{\pow}[1]{{\cal P}(#1)}

\newcommand{\restr}[2]{#1|_{#2}}
\newcommand{\dom}[1]{\text{dom}(#1)}
\newcommand{\presol}[1]{\text{Pre-sol}_{#1}}
\newcommand{\usun}[1]{}
\newcommand{\coset}[1]{[#1]}
\newcommand{\Z}{{\mathbb Z}}
\newcommand{\arity}[1]{\text{arity}(#1)}

\newcommand{\torus}{I}

% !TEX root = main_bw_2Helly.tex

\section{Introduction}

Many natural computational problems may be seen as instantiations of a generic framework called
\emph{constraint satisfaction problems} (CSP).
In a nutshell, a CSP is parametrized by a \emph{template}, a finite relational structure $T$; the CSP over $T$  
asks if a given relational structure $I$ over the same vocabulary as $T$ admits a homomorphism to $T$ 
(called a solution of $I$).
For every template $T$, the CSP over $T$ (denoted CSP($T$)) is always in NP;
a famous conjecture due to Feder and Vardi~\cite{FV98} says that for every template $T$, the CSP($T$) is 
either solvable in P, or NP-complete. 

We concentrate on \emph{coset templates} where, roughly speaking, both the carrier set and the relations have 
a group structure.
The coset templates are cores and admit a Malcev polymorphism, and are thus in P~\cite{F05,BD06}.
A coset template $T$ naturally induces a group action on (partial) solutions. If, roughly speaking, 
the induced group action can be extended
to the state space of an algorithm solving CTP($T$), and the algorithm execution is invariant under the group action,
we call the algorithm \emph{equivariant}. We investigate equivariant algorithms which are \emph{local}, i.e.~update only a bounded amount of data in every single step of execution.

A widely studied family of local equivariant algorithms is the \emph{local consistency algorithms} that compute
families of partial solutions of bounded size conforming to a local consistency condition.
Templates $T$ whose CSP($T$) is solvable by a local consistency algorithm are said to have \emph{bounded width}.
Another source of examples of local equivariant algorithms are logics (via their decision procedures); 
relevant logics for us will be fix-point extensions of first order logic, like LFP or IFP or IFP+C (IFP with counting quantifiers)~\cite{fmt-book}.
We say that CSP($T$) is definable in a logic if some formula of the logic defines the set of all solvable instances of CSP($T$).
% !!! It is well known that bounded width is equivalent to definability in Datalog~\cite{FV98}.

%An even more severe restriction of tractable templates is \emph{strict width 2}, i.e.~templates whose
%CSP is solvable by the $(2, 3)$-consistency algorithm and moreover all the solutions may be constructed in a greedy way
%from the family of partial solutions computed by the algorithm. 
%Both subclasses have multiple alternative characterizations; for instance, bounded width is equivalent to definability in Datalog, 
%while strict width 2 is equivalent to existence of majority polymorphism~\cite{FV98}.

Our technical contribution is the proof of the following implication: if  CSP($T$), for a coset template $T$, is solvable by a local equivariant algorithm then $T$ is \emph{2-Helly}. 
In consequence, all local equivariant algorithms that can capture 2-Helly templates are equally expressive.
The 2-Helly property says that for every partial solution $h$ of an instance $I$,
if $h$ does not extend to a solution of $I$ then 
the restriction of $h$ to some two elements of its domain does not either. %extend to a solution of $I$.
This is a robust property of templates with many equivalent characterizations (e.g.~strict width 2, or existence of 
a majority polymorphism)~\cite{FV98}. 
%Our proof method is however not specific for bounded width, and can be 
%straightforwardly adapted to any other algorithm which is \emph{local} and \emph{equivariant}, and therefore
%if CSP($T$) is solvable by a local equivariant algorithm then $T$ is 2-Helly.
As a corollary we obtain equivalence of the following conditions for coset templates:
%
%\begin{itemize}
(i) 2-Helly;
(ii) bounded width; and % solvability by a local consistency algorithm;
(iii) definability in fix-point extensions of first-order logic.
%\end{itemize}
%
The corollary is not a new result; equivalence of the first two conditions may be inferred e.g.~from Lemma~9 in~\cite{DL08}
(even for all core templates with a Malcev polymorphism), while equivalence of the last two ones follows from~\cite{ABD09} 
together with the results of~\cite{BK09} (cf.~also~\cite{BK14}).
All these results build on Tame Congruence Theory~\cite{HM88}, and their proofs are a detour through the 
deep omitting-type theorems, cf.~\cite{LZ07}.
Contrarily to this, our proof has an advantage of being short, elementary, and self-contained,
thus offering a direct insight into the problem.
%Furthermore, our result is more general than~\eqref{eq:cor}: it implies that
%\emph{all} local and equivariant algorithms that can capture 2-Helly templates are equally expressive, 
%namely they solve CSP($T$) for exactly the same coset templates $T$.

Finally, our proof brings to light interesting connections between the 
CSP theory and the descriptive complexity theory: % (cf.~for instance~\cite{grohebook}):
the crucial step of the proof is essentially based on a construction similar to 
\emph{CFI graphs}, the intricate construction of Cai, F{\"u}rer and Immerman~\cite{CFI92}.
CFI graphs have been designed to separate properties of relational structures decidable in polynomial time
from IFP+C. 
A similar construction has been used later in~\cite{BKLT13} to show lack of determination of Turing machines in sets with atoms~\cite{BKL11}.
The crucial step of our proof is actually a significant generalization of the construction of~\cite{BKLT13}.

%As another example of connection between CSP theory and descriptive complexity, one should mention a result of~\cite{ABD09} which, together with the
%results of~\cite{BK09} (cf.~also~\cite{BK14}), 
%imply that bounded width (i.e.~definability in Datalog) is equivalent to definability of the CSP problem in the logic IFP+C. 
%The proof again strongly relies on the algebraic setting of~\cite{HM88}; contrarily to this, our elementary proof may be easily adapted
%to show that definability in IFP+C is equivalent to strict width 2, and thus also to bounded width, for coset templates.

%The direct motivation for this work came from~\cite{KLOT14}, where CSP theory, and especially bounded width property, has been used for
%an effective characterization of those alphabets that admit determination of Turing machines with atoms. 
%As templates used in~\cite{KLOT14} are coset ones,
%our proof may be used to eliminate referring to the algebra in~\cite{KLOT14}, thus making the proof self-contained.

For completeness we mention a recent paper of Barto~\cite{B14} which announces the collapse of
bounded width hierarchy for \emph{all} templates: bounded width implies width $(2, 3)$, which is however weaker than 
2-Helly in general.

\section{Preliminaries}

\subsection{Constraint satisfaction problems}

A \emph{template} $T$ is a finite relational structure, i.e.~consists of a finite carrier set $T$ 
(denoted by the same symbol as a template) and a finite family of relations in $T$.
Each relation $R \subseteq T^n$ is of a specified arity, $\arity{R} = n$.
Let $T$ be fixed henceforth.

An instance $I$ over a template $T$ consists of a finite set $I$ of elements, 
and a finite set of \emph{constraints}.
A constraint, written $R(a_1, \ldots, a_n)$, is specified by a template relation $R$
and an $n$-tuple of elements of $I$,  where $\arity{R} = n$.

A partial function $h$ from $I$ to $T$, with $\{a_1, \ldots, a_n\} \subseteq \dom{h}$, 
\emph{satisfies} a constraint $R(a_1, \ldots, a_n)$ in $I$ when $R(h(a_1), \ldots, h(a_n))$ holds in $T$.
If $h$ satisfies all constraints in its domain, $h$ is a \emph{partial solution} of $I$,
and $h$ is a \emph{solution} when it is total.
By the size of a partial solution $h$ we mean the size of $\dom{h}$.
The constraint satisfaction problem over $T$, denoted CSP($T$), is a decision problem that asks if 
a given instance over $T$ has a solution.

There are many equivalent formulations of the problem. For instance, one can see $I$ and $T$ as
relational structures over the same vocabulary, and then CSP($T$) asks if there is a homomorphism 
from $I$ to $T$. 

%For $R \in \Sigma$, whenever a tuple of elements $a_1, \ldots, a_n$ of a structure $I$ is related by 
%the relation $R$, we write $R(a_1, \ldots, a_n)$ and call the tuple a \emph{constraint}. 
%In the sequel we will write $R$ for a symbol in $\Sigma$, as well as for a relation
%$R \subseteq I^n$ in a structure $I$, hoping that not confusion occurs.

\subsection{2-Helly templates}

%\begin{definition}
For an instance $I$ over some template, % with at least $k$ elements, 
and $k < j$,
a \emph{$(k, j)$-anomaly} is a partial solution $h$ of $I$ of size $j$
that does not extend to a solution, such that restriction of $h$ to every $k$-element subset of $\dom{h}$ does extend to a solution.
%\end{definition}
%
%\noindent
Clearly a $(k, j)$-anomaly is also $(k', j)$-anomaly, for $k' < k$.
\begin{definition} \label{def:2Helly}
A template $T$ is 2-Helly if no instance of $T$ admits a $(2, j)$-anomaly, for $j > 2$.
\end{definition}
In other words: for every partial solution $h$ of size $j > 2$,
if the restriction of $h$ to every 2-element subset of its domain extends to a solution then
$h$ does extend to a solution too. 
Analogously one may define $k$-Helly for arbitrary $k$, which however will not be needed here.

We conveniently characterize 2-Helly templates as follows.

\begin{lemma} \label{lem:iffanomaly}
A template $T$ is 2-Helly iff no instance of $T$ admits a $(k, k+1)$-anomaly, for $k \geq 2$.
% The arity of a template $T$ is equal to the greatest size of an anomaly (understood as the size of the domain) in an instance of $T$.
\end{lemma}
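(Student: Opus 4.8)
The lemma states: $T$ is 2-Helly iff no instance admits a $(k,k+1)$-anomaly for any $k \geq 2$.

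One direction is trivial: if $T$ is 2-Helly, then by definition no instance admits a $(2,j)$-anomaly for $j>2$. Since every $(k,k+1)$-anomaly with $k\geq 2$ is in particular a $(2,k+1)$-anomaly (restriction to 2-element subsets extends, because restriction to $k$-element subsets extends and $k\geq 2$), and $k+1 > 2$, no such anomaly can exist.

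The other direction is the content. Suppose some instance $I$ admits a $(2,j)$-anomaly $h$ with $j>2$; I want to produce a $(k,k+1)$-anomaly for some $k\geq 2$. Take $h$ to be such an anomaly with $j$ minimal. I claim $h$ is in fact a $(j-1, j)$-anomaly, i.e. the restriction of $h$ to every $(j-1)$-element subset extends to a solution. Indeed, suppose not: some restriction $h'$ of $h$ to a $(j-1)$-element subset does not extend to a solution. Now $h'$ is a partial solution of size $j-1$; its restriction to every 2-element subset still extends to a solution (these are among the 2-element subsets of $\dom h$). If $j-1 > 2$, then $h'$ is a $(2,j-1)$-anomaly, contradicting minimality of $j$. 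If $j-1 = 2$, then $h'$ has size 2 and does extend to a solution by hypothesis on $h$ — contradiction. So in all cases $h$ is a $(j-1,j)$-anomaly. Setting $k := j-1 \geq 2$, this is the desired $(k,k+1)$-anomaly.

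The argument is elementary; the only mild subtlety is handling the boundary case $j-1=2$ correctly, and observing that a $(2,m)$-anomaly for $m>2$ remains a valid witness of non-2-Hellyness so that minimality of $j$ may be invoked. No deeper machinery is needed.
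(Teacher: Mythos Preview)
Your proof is correct and follows essentially the same minimality argument as the paper: you take a $(2,j)$-anomaly with $j$ minimal and show it is already a $(j-1,j)$-anomaly, whereas the paper fixes one $(2,j)$-anomaly $h$ and passes to an inclusion-minimal subset $X\subseteq\dom h$ with $\restr{h}{X}$ non-extendable, obtaining a $(|X|-1,|X|)$-anomaly. The two minimizations are interchangeable and the reasoning is the same; your explicit treatment of the boundary case $j-1=2$ corresponds exactly to the paper's observation that $|X|>2$.
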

\begin{proof}
For one direction, we observe that a $(k, k+1)$-anomaly is also a $(2, k+1)$-anomaly.

For the other direction, consider an instance with some fixed $(2, j)$-anomaly $h$, for $j > 2$. 
For every subset $X \subseteq \dom{h}$, the restriction $\restr{h}{X}$ either extends to a solution of $I$, or not.
Consider the minimal subset $X$ wrt.~inclusion such that  $\restr{h}{X}$ does not extend to a solution of $I$.
For all strict subsets $X' \subseteq X$, $\restr{f}{X'}$ extends to a solution, hence $\restr{f}{X}$ is a
$(k-1, k)$-anomaly, where $k$ is the size of $X$. Note that $k > 2$.
%If $k$ is the cardinality of $X$
%For every $k \in \{2, \ldots, j-1\}$, and for every $(k+1)$-element subset of $\dom{h}$, restriction of $h$ to this subset either is a 
%$(k, k+1)$-anomaly or not.
%We claim that as long as $h$ is a $(2, j)$-anomaly, the restriction of $h$ to at least one of the subsets must be an $(k, k+1)$-anomaly as well.
\end{proof}

\subsection{The pp-definable relations}
We adopt the convention to mention explicitly the free variables of a formula $\phi(x_1, \ldots, x_n)$.
In the specific instances $I$ of CSP($T$) used in our proof it will be convenient to use \emph{pp-definable} relations, i.e.~relations 
definable by an existential first-order formula of the form:
\begin{align} \label{eq:ppformula}
\phi(x_1, \ldots, x_n) \ \equiv \ \exists x_{n+1}, \ldots, x_{n+m} . \ \psi_1 \land \ldots \land \psi_l,
\end{align}
where every subformula $\psi_i$ is an atomic proposition $R(x_{i_1}, \ldots, x_{i_j})$, for some
template relation $R$.
The formula $\phi$ defines the $n$-ary relation in $T$ containing the tuples
\[
(t_1, \ldots, t_n) \in T^n
\]
such that the valuation $x_1 \mapsto t_1, \ldots, x_n \mapsto t_n$ satisfies $\phi$.
% referring only to variables $x_1, \ldots, x_{n+m}$.
The pp-definable relations are closed under projection and intersection.

In the sequel we feel free to implicitly assume that elements of an instance are totally ordered.
The implicit order allows us to treat (partial) solutions as tuples, and allows to state the following useful fact:
\begin{fact} \label{fact:solutions pp-definable}
Let $X \subseteq I$ be a subset of an instance. The set of partial solutions with domain $X$ that extend to a solution of $I$,
if nonempty, is pp-definable.
\end{fact}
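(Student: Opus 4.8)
The plan is to write the defining pp-formula down explicitly and then check that it works. List the elements of $X$ according to the implicit order as $a_1, \dots, a_n$, and enumerate the remaining elements $I \setminus X$ as $a_{n+1}, \dots, a_{n+m}$. Introduce a first-order variable $x_a$ for every element $a \in I$. For each constraint $R(b_1, \dots, b_k)$ of $I$ form the atom $R(x_{b_1}, \dots, x_{b_k})$, and let $\psi_1, \dots, \psi_l$ be the atoms obtained from all constraints of $I$ in this way. Then set
\[
\phi(x_{a_1}, \dots, x_{a_n}) \ \equiv \ \exists x_{a_{n+1}}, \dots, x_{a_{n+m}} .\ \psi_1 \land \dots \land \psi_l ,
\]
which is a formula of the form~\eqref{eq:ppformula} (using the convention that an empty conjunction is $\top$, which covers a constraint-free instance $I$, for which the claimed set is all of $T^n$).

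The next step is to check that $\phi$ defines precisely the set in the statement. The point is the evident bijection between valuations $v \colon \{x_a : a \in I\} \to T$ and total functions $g \colon I \to T$, under which $v$ satisfies the atom $R(x_{b_1}, \dots, x_{b_k})$ exactly when $g$ satisfies the constraint $R(b_1, \dots, b_k)$; hence $v$ satisfies the matrix $\psi_1 \land \dots \land \psi_l$ exactly when the corresponding $g$ is a solution of $I$. Reading off the existential prefix, a tuple $(t_1, \dots, t_n) \in T^n$ satisfies $\phi$ iff the partial function $h \colon X \to T$ given by $h(a_i) = t_i$ extends to some solution of $I$ --- which is exactly the set in question. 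Note that it matters here that the free variables of $\phi$ are exactly the $x_{a_i}$ with $a_i \in X$, so that $\dom{h}$ is forced to equal $X$ rather than some proper subset of it.

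I do not expect a real obstacle: this is just the standard translation of a CSP instance into a pp-formula, and the only delicate points are that a constraint may repeat an element (harmless, since the atoms in~\eqref{eq:ppformula} are allowed to have repeated variables) and the precise role of the nonemptiness hypothesis. The latter is what turns $\phi$ into a bona fide pp-definition: for $\phi$ to define a relation in the sense of~\eqref{eq:ppformula} one wants $\phi$ to be satisfiable, equivalently the defined set to be nonempty --- and over a template whose relations are all full, the empty relation genuinely is not pp-definable, so the caveat cannot be dropped. When the set is nonempty, however, the formula $\phi$ constructed above serves verbatim.
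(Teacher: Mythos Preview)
Your argument is correct and is exactly the standard translation the paper has in mind; indeed the paper states this Fact without proof as folklore, and later (just before Fact~\ref{fact:action ppdef}) alludes to precisely this correspondence between instances and pp-formulas when remarking that ``every pp-definable relation is essentially a projection of the set of solutions of some instance.'' One small wording point: it is not that $\phi$ fails to define a relation when unsatisfiable---it always defines one, possibly empty---but rather, as you then correctly note, that the empty relation need not be pp-definable over an arbitrary template, which is why the hypothesis is there.
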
 
% !!!!
%We recall the following widely known fact (see for instance~\cite{LZ07}):
%\begin{fact} \label{fact:pp-definable equivalent}
%Adding a pp-definable relation to a template preserves bounded width.
%% yields a computationally equivalent template. In particular, bounded width and 2-Helly are preserved.
%\end{fact}

\subsection{Almost-direct product of groups}  \label{sec:almost}

Overloading the notation, we write 1 for the identity element in any group. 
We use the diagrammatic order for writing the group operation $\tau \pi$ on elements $\tau, \pi$ of a group.

In the proof we will need the following elementary notion from group theory.\footnote{
The notion seems to be of independent interest; it is related to the \emph{arity} of a 
permutation group, 
as investigated for instance by Cherlin at al.~in~\cite{Cherlin96}.}
\begin{definition} \label{def:almost direct product}
Let $G_1$, $G_2$ and $G_3$ be arbitrary finite groups and let $H \leq G_1 \times G_2 \times G_3$ 
be a subgroup of the direct product. We call $H$ an \emph{almost-direct product} of $G_1, G_2, G_3$
if $H$ verifies the following conditions:
\begin{align} 
\label{eq:stronganomaly1}
%\hspace{-0.5cm}
%\exists \pi_1 \in S_1, \ \pi_2 \in S_2, \ \pi_3 \in S_3 . \ \neg R(\pi_1, \pi_2, \pi_3) \\
H \neq G_1 \times G_2 \times G_3  \\
\label{eq:stronganomaly2}
\forall \pi_2 \in G_2, \ \pi_3 \in G_3, \ \exists \pi_1 \in G_1 . \ (\pi_1, \pi_2, \pi_3) \in H \\
\label{eq:stronganomaly3}
\forall \pi_1 \in G_1, \ \pi_3 \in G_3, \ \exists \pi_2 \in G_2 . \ (\pi_1, \pi_2, \pi_3) \in H \\
\label{eq:stronganomaly4}
\forall \pi_1 \in G_1, \ \pi_2 \in G_2, \ \exists \pi_3 \in G_3 . \ (\pi_1, \pi_2, \pi_3) \in H 
\end{align}
Furthermore, an almost-direct product $H$ is \emph{strict} if $\pi_1$ (resp.~$\pi_2$, $\pi_3$)
in condition~\eqref{eq:stronganomaly2} (resp.~\eqref{eq:stronganomaly3}, \eqref{eq:stronganomaly4})
is uniquely determined.
%there is exactly one $\pi_1$ 
%satisfying~\eqref{eq:stronganomaly2}, and likewise for $\pi_2$ and $\pi_3$ in~\eqref{eq:stronganomaly3} 
%and~\eqref{eq:stronganomaly4}. 
\end{definition}
Let $H \leq G_1 \times G_2 \times G_3$ be an almost-direct product.
Consider the following normal subgroup $N_1$ of $G_1$:
\[
N_1 \ = \ \{ \pi_1 \in G_1 \ : \ (\pi_1, 1, 1) \in H \}.
\]
% and observe that $N_1$ is a normal subgroup of $G_1$.
Likewise define the normal subgroups $N_2$ and $N_3$ of $G_2$ and $G_3$, respectively.
In consequence, the product $N = N_1 \times N_2 \times N_3$ is 
% not just a subgroup of $H$ (which follows by the very definition of $N_1, N_2, N_3$), but actually 
a normal subgroup of $H$.  
Define the groups $\coset{G_1}, \coset{G_2}, \coset{G_3}$ and $\coset{H}$ as the quotients by
$N_1, N_2, N_3$ and $N$, respectively. 

By the definition of $N_1$, the quotient group $\coset{G_1}$ is obtained by identifying
its elements $\pi_1, \pi'_1$ that are equivalent:
\[
\pi_1 \equiv_1 \pi'_1 \iff (\forall \pi_2, \pi_3, (\pi_1, \pi_2, \pi_3) \in H \iff (\pi'_1, \pi_2, \pi_3) \in H).
\]
Similarly one defines the equivalences $\equiv_2$ and $\equiv_3$.
Note that $H$ is closed under the three equivalences; for instance,
\begin{align}  \label{eq:equiv}
(\pi_1, \pi_2, \pi_3) \in H \ \text{ and } \ \pi_1 \equiv_1 \pi'_1 \implies (\pi'_1, \pi_2, \pi_3) \in H.
\end{align}

\begin{lemma} \label{lem:strictquotient}
The quotient group $\coset{H}$ is a strict almost-direct product of $\coset{G_1}, \coset{G_2}, \coset{G_3}$.
%Every almost-direct product $H$ has a surjective homomorphism onto a strict almost-direct product.
\end{lemma}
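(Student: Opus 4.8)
The plan is to realize $\coset{H}=H/N$ as a subgroup of $\coset{G_1}\times\coset{G_2}\times\coset{G_3}$ and then to check for it the conditions of Definition~\ref{def:almost direct product} together with the strictness clause. For the first part, observe that $\coset{G_1}\times\coset{G_2}\times\coset{G_3}$ is canonically isomorphic to $(G_1\times G_2\times G_3)/N$, and that, since $N$ is a normal subgroup of $H$, under this isomorphism $H/N=\coset{H}$ becomes the subgroup $\{(\coset{\pi_1},\coset{\pi_2},\coset{\pi_3}) : (\pi_1,\pi_2,\pi_3)\in H\}$ of $\coset{G_1}\times\coset{G_2}\times\coset{G_3}$. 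The one fact doing all the work afterwards is the following \emph{reflection property}: for all $\pi_i\in G_i$,
\begin{align} \label{eq:reflect}
(\pi_1,\pi_2,\pi_3)\in H \quad\Longleftrightarrow\quad (\coset{\pi_1},\coset{\pi_2},\coset{\pi_3})\in\coset{H}.
\end{align}
Left to right is immediate from the description of $\coset{H}$ just given; for the converse I would take a preimage $(\sigma_1,\sigma_2,\sigma_3)\in H$ of the right-hand side, so that $\sigma_i\equiv_i\pi_i$ for each $i$, and then apply \eqref{eq:equiv} and its two analogues in succession, replacing $\sigma_1$ by $\pi_1$, then $\sigma_2$ by $\pi_2$, then $\sigma_3$ by $\pi_3$, each step staying inside $H$ because $H$ is closed under the three equivalences.

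Granting \eqref{eq:reflect}, conditions \eqref{eq:stronganomaly2}, \eqref{eq:stronganomaly3}, \eqref{eq:stronganomaly4} for $\coset{H}$ follow verbatim from the same conditions for $H$: for instance, for \eqref{eq:stronganomaly2}, given $\coset{\pi_2}$ and $\coset{\pi_3}$ one chooses $\pi_1$ with $(\pi_1,\pi_2,\pi_3)\in H$ and reads off $(\coset{\pi_1},\coset{\pi_2},\coset{\pi_3})\in\coset{H}$. Condition \eqref{eq:stronganomaly1} does not even need \eqref{eq:reflect}: since all groups are finite, $\coset{H}$ and $\coset{G_1}\times\coset{G_2}\times\coset{G_3}$ have cardinalities $|H|/|N|$ and $|G_1|\,|G_2|\,|G_3|/|N|$, so they can coincide only if $|H|=|G_1\times G_2\times G_3|$, i.e.\ $H=G_1\times G_2\times G_3$, contradicting \eqref{eq:stronganomaly1} for $H$.

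For strictness, suppose $(\coset{\pi_1},\coset{\pi_2},\coset{\pi_3})\in\coset{H}$ and $(\coset{\pi'_1},\coset{\pi_2},\coset{\pi_3})\in\coset{H}$. By \eqref{eq:reflect} both $(\pi_1,\pi_2,\pi_3)$ and $(\pi'_1,\pi_2,\pi_3)$ lie in $H$, hence so does $(\pi'_1,\pi_2,\pi_3)^{-1}(\pi_1,\pi_2,\pi_3)$; here the last two coordinates cancel, so this element has the form $(\delta,1,1)$, and by the definition of $N_1$ we get $\delta\in N_1$, whence $\coset{\pi_1}=\coset{\pi'_1}$. The same computation handles the second and third coordinates, which gives strictness.

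I do not expect a real obstacle here: all the content is concentrated in the reflection property \eqref{eq:reflect}, which is nothing but a restatement of the assumption that $H$ is a union of classes of the product equivalence $\equiv_1\times\equiv_2\times\equiv_3$, and once that is in place everything else is mechanical. The only points that ask for a little care are the bookkeeping with the diagrammatic group order in the strictness step, and the fact --- already asserted in the text --- that $\equiv_i$ is exactly the coset equivalence of $N_i$, so that the identification $\coset{G_i}=G_i/N_i$ is legitimate; verifying this last point is where \eqref{eq:stronganomaly3} and \eqref{eq:stronganomaly4} for $H$ get used, to exhibit, for any prescribed first (respectively second, third) component, a triple of $H$ carrying it.
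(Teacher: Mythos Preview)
Your proof is correct and follows essentially the same route as the paper's. The paper is terser---it asserts that $\coset{H}$ inherits the almost-direct product structure from $H$ without spelling out your reflection property~\eqref{eq:reflect}, and for strictness it lifts the two elements of $\coset{H}$ to preimages $\tau=(\rho_1,\rho_2,\rho_3)$ and $\tau'=(\rho'_1,\rho_2,\rho_3)$ in $H$ sharing the last two coordinates (this lifting is exactly your use of~\eqref{eq:reflect}) and then verifies $\rho_1\equiv_1\rho'_1$ via the conjugation $\sigma\,\tau^{-1}\tau'$, which is the same cancellation you perform when you compute $(\pi'_1,\pi_2,\pi_3)^{-1}(\pi_1,\pi_2,\pi_3)=(\delta,1,1)$; you then invoke the definition of $N_1$ where the paper invokes the definition of $\equiv_1$, and these are declared equivalent in the text preceding the lemma. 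Your explicit cardinality argument for condition~\eqref{eq:stronganomaly1} is a detail the paper leaves implicit.
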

\begin{proof}

$\coset{H}$, being the quotient of $H$, is an almost-direct product of $\coset{G_1}, \coset{G_2}, \coset{G_3}$.
We claim that $\coset H$ is strict.
Concentrating on point~\eqref{eq:stronganomaly2} in 
Definition~\ref{def:almost direct product} (the remaining two conditions are treated similarly),
we need to prove uniqueness of $\pi_1$.
Suppose  
\[
(\pi_1, \pi_2, \pi_3) \in \coset{H} \quad \text{ and } \quad
(\pi'_1, \pi_2, \pi_3) \in \coset{H};
\]
we need to derive $\pi_1 = \pi'_1$.
As $H$ is closed under the three equivalences, there are some
$\rho_1, \rho'_1, \rho_2, \rho_3$
such that
\[
\tau = (\rho_1, \rho_2, \rho_3) \in H, \qquad
\tau' = (\rho'_1, \rho_2, \rho_3) \in H,
\]
and  (writing $\coset{\rho}$ for the equivalence class containing $\rho$)
\[
\coset{\rho_1} = \pi_1, \quad \coset{\rho'_1} = \pi'_1, \quad \coset{\rho_2} = \pi_2, \quad \coset{\rho_3} = \pi_3;
\]
and we need to derive $\rho_1 \equiv_1 \rho'_1$. The equivalence follows easily:
whenever $\sigma = (\pi_1, \tau_2, \tau_3) \in H$, we have
\[
(\pi'_1, \tau_2, \tau_3) = \sigma \, \tau^{-1} \, \tau' \in H.
\]
%\begin{align} \label{eq:quot}
%(\rho_1, \rho_2, \rho_3) \in H \quad \text{ and } \quad (\tau_1, \tau_2, \tau_3) \in H,
%\end{align}
%and therefore (we write $\coset{\rho}$ for the coset containing $\rho$) 
%\[
%(\coset{\rho_1}, \coset{\rho_2}, \coset{\rho_3}) \in \coset{H} \quad \text{ and } \quad
%(\coset{\tau_1}, \coset{\tau_2}, \coset{\tau_3}) \in \coset{H},
%\]
%and suppose $\coset{\rho_2} = \coset{\tau_2}$ and $\coset{\rho_3} = \coset{\tau_3}$. 
%We will demonstrate $\coset{\rho_1} \ = \ \coset{\tau_1}$.
%
%Using the definition of $N_2$ and $\coset{\rho_2} = \coset{\tau_2}$, we deduce
%that  for all $\pi_1 \in G_1$, $\pi_3 \in G_3$,
%\[
%(\pi_1, \rho_2, \pi_3) \in H \text{ if and only if }
%(\pi_1, \tau_2, \pi_3) \in H.
%\]
%Applying this equivalence to the first half of~\eqref{eq:quot} we obtain
%\[
%(\rho_1, \tau_2, \rho_3) \in H.
%\]
%Using $\coset{\pi_3} = \coset{\tau_3}$, from the above fact we similarly deduce 
%\[
%(\rho_1, \tau_2, \tau_3) \in H
%\]
%which shows, together with the second half of~\eqref{eq:quot}, that
%\[
%\coset{\pi_1} \ = \ \coset{\tau_1}
%\]
%thus proving the uniqueness of $\pi_1$ in condition~\eqref{eq:stronganomaly2}.
\end{proof}
\begin{lemma} \label{lem:abelian}
A strict almost-direct product is commutative.%\footnote{I am grateful to Szymek Toru{\'n}czyk for the simplification of the proof.}
\end{lemma}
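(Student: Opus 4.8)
The plan is to exploit strictness to set up a functional dependency that forces the group to be abelian. Let $H \leq G_1 \times G_2 \times G_3$ be a strict almost-direct product. First I would observe that by strictness, condition~\eqref{eq:stronganomaly4} gives, for each pair $(\pi_1, \pi_2) \in G_1 \times G_2$, a \emph{unique} $\pi_3 \in G_3$ with $(\pi_1, \pi_2, \pi_3) \in H$; write this as $\pi_3 = f(\pi_1, \pi_2)$. Because $H$ is a subgroup (closed under the coordinatewise product), $f$ is a group homomorphism from $G_1 \times G_2$ to $G_3$: indeed $(\pi_1 \pi_1', \pi_2 \pi_2', f(\pi_1,\pi_2) f(\pi_1',\pi_2')) \in H$, and uniqueness forces $f(\pi_1 \pi_1', \pi_2 \pi_2') = f(\pi_1,\pi_2) f(\pi_1',\pi_2')$. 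The same reasoning applied to \eqref{eq:stronganomaly2} and \eqref{eq:stronganomaly3} yields homomorphisms $g : G_2 \times G_3 \to G_1$ and $k : G_1 \times G_3 \to G_2$, and all three describe the same set $H$.

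Next I would extract the key consequences. Since $f$ is a homomorphism of $G_1 \times G_2$, its restrictions $f_1 := f(\cdot, 1) : G_1 \to G_3$ and $f_2 := f(1, \cdot) : G_2 \to G_3$ are homomorphisms whose images commute elementwise in $G_3$ (as $f(\pi_1, 1)$ and $f(1, \pi_2)$ are values of a homomorphism on commuting arguments $(\pi_1,1)$ and $(1,\pi_2)$), and $f(\pi_1, \pi_2) = f_1(\pi_1) f_2(\pi_2)$. Now I claim $f_2$ is injective: if $f_2(\pi_2) = 1$ then $(1, \pi_2, 1) \in H$, so $\pi_2 \in N_2$; but passing to the quotient (as in Lemma~\ref{lem:strictquotient}) is exactly what makes the product strict, and in a strict product $N_2$ is trivial — equivalently, uniqueness of $\pi_2$ in~\eqref{eq:stronganomaly3} for $(\pi_1,\pi_3) = (1,1)$ forces $\pi_2 = 1$. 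Symmetrically $f_1$ is injective (using uniqueness in~\eqref{eq:stronganomaly2}), so both are isomorphisms onto their images.

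Finally, I would put the pieces together to conclude that each $G_i$ is abelian. Take $\pi_2, \pi_2' \in G_2$. Then $f_2(\pi_2)$ and $f_2(\pi_2')$ both commute with everything in the image of $f_1$ and also lie in the image of $f_2$; but $G_3$ is generated by $\mathrm{im}(f_1) \cup \mathrm{im}(f_2)$ — this follows because $f$ is surjective, as \emph{every} $\pi_3 \in G_3$ arises as $f(\pi_1,\pi_2)$ for suitable $\pi_1, \pi_2$ by~\eqref{eq:stronganomaly4}, hence $\pi_3 = f_1(\pi_1) f_2(\pi_2)$. So to show $G_3$ abelian it suffices to show $\mathrm{im}(f_2)$ is central, i.e.\ that $f_2(\pi_2)$ commutes with $f_1(\pi_1)$ for all $\pi_1, \pi_2$, which we already have. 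Thus $G_3$ is abelian, and by the symmetric roles of the three coordinates, $G_1$ and $G_2$ are abelian as well; since $H$ is a subgroup of the abelian group $G_1 \times G_2 \times G_3$, it is commutative. The main obstacle I anticipate is being careful about which uniqueness statement to invoke at each step — getting injectivity of the $f_i$ correct from strictness, and confirming that $\mathrm{im}(f_1)$ and $\mathrm{im}(f_2)$ together generate $G_3$ — since the argument is a small diagram chase through three symmetric conditions and it is easy to conflate them.
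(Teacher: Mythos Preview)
Your approach via the homomorphism $f: G_1 \times G_2 \to G_3$ is sound in spirit, but the final step has a genuine gap, and you cite the wrong condition at a key point.

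First, condition~\eqref{eq:stronganomaly4} does not give surjectivity of $f$: it is precisely what makes $f$ well-defined (for all $\pi_1,\pi_2$ there \emph{exists} $\pi_3$), not that every $\pi_3$ is attained. Surjectivity comes from~\eqref{eq:stronganomaly2} or~\eqref{eq:stronganomaly3}.

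Second, and more seriously, the inference ``$G_3$ is generated by $\mathrm{im}(f_1)\cup\mathrm{im}(f_2)$ and $\mathrm{im}(f_2)$ is central, hence $G_3$ is abelian'' is invalid: from that hypothesis you would still need $\mathrm{im}(f_1)$ to be abelian, which is exactly what remains to be proved. (For the bare implication, take $G_3=S_3$ with $\mathrm{im}(f_2)=\{1\}$.) Likewise, ``$\mathrm{im}(f_2)$ is central'' is not the same as ``$\mathrm{im}(f_2)$ commutes with $\mathrm{im}(f_1)$'' unless $\mathrm{im}(f_1)=G_3$. The fix is to prove something stronger than generation: by~\eqref{eq:stronganomaly2} with $\pi_2=1$, every $\pi_3\in G_3$ satisfies $(\pi_1,1,\pi_3)\in H$ for some $\pi_1$, so $f_1$ \emph{itself} is surjective; symmetrically $f_2$ is surjective by~\eqref{eq:stronganomaly3}. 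Hence $\mathrm{im}(f_1)=\mathrm{im}(f_2)=G_3$, and your observation that the two images commute elementwise now yields $G_3$ abelian immediately.

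For comparison, the paper's argument bypasses the homomorphism machinery entirely. Given $\pi,\tau\in G_1$, it picks $(\pi,1,\rho_3)\in H$ and $(\tau,\rho_2,1)\in H$ using~\eqref{eq:stronganomaly4} and~\eqref{eq:stronganomaly3}, multiplies them in both orders to get $(\pi\tau,\rho_2,\rho_3)$ and $(\tau\pi,\rho_2,\rho_3)$ in $H$, and invokes uniqueness in~\eqref{eq:stronganomaly2} to conclude $\pi\tau=\tau\pi$. That is three lines with no bookkeeping; your route (once repaired) is longer but has the side benefit of exhibiting $G_1\cong G_2\cong G_3$ via the bijective $f_i$.
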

\begin{proof}
Let $H \leq G_1 \times G_2 \times G_3$ be a strict almost-direct product and let
$\pi, \tau \in G_1$.
We know that there exist $\rho_2 \in G_2$ and $\rho_3 \in G_3$ so that (we do not use the uniqueness of $\rho_2$ and $\rho_3$ here):
\[
(\pi, 1, \rho_3) \in H \quad \text{ and } \quad (\tau, \rho_2, 1) \in H.
\]
Applying the group operation to these two elements in two different orders we get:
\[
(\pi \tau, \rho_2, \rho_3)  \in H \quad \text{ and } \quad (\tau \pi, \rho_2, \rho_3) \in H.
\]
Now using the uniqueness of $\pi \tau$ (and $\tau \pi)$, we deduce that $\pi \tau = \tau \pi$. As $\pi$ and $\tau$ have been chosen
arbitrarily, the group $G_1$ is commutative. Likewise for $G_2$ and $G_3$, and in consequence also for
the subgroup $H \leq G_1 \times G_2 \times G_3$.%\footnote{We are grateful to Szymon Toru{\'n}czyk for providing that simple proof of this fact.}
\end{proof}

\section{Coset templates}
Below by a coset we always mean a right coset. (This choice is however arbitrary and we could consider left cosets instead.)
 
\begin{definition}
\emph{Coset templates} are particular templates $T$ that satisfy the following conditions:
\begin{itemize}
\item the carrier set of $T$ is a disjoint union of groups, call these groups \emph{carrier groups};
\item every $n$-ary relation $R$ in $T$ is a coset in the direct product $G_1 \times \ldots \times G_n$ 
of some carrier groups $G_1, \ldots, G_n$;
\item for a relation $R \subseteq G_1 \times \ldots \times G_n$ in $T$ and $\pi \in G_1 \times \ldots \times G_n$, 
the coset $R  \pi$ is also a relation in $T$;
\item for every carrier group $G$, $T$ has a unary relation $\{1\}$ containing exactly one element, the identity of $G$.  
\end{itemize}
\end{definition}
\noindent
Note that the last two conditions imply that a coset template contains every singleton as a unary relation, and thus is a 
rigid core, i.e.~admits no nontrivial endomorphisms.
\begin{example}
Here is a family of coset templates $T_n$, for $n \geq 2$. 
The carrier set of $T_n$ is $\{1, \pi\}$, the cyclic group of order 2.
Relations of $T_n$ are, except the two singleton unary relations $1(\_)$ and $\pi(\_)$, two $n$-ary relations
\[
R_\text{even}, \ R_\text{odd} \ \subseteq \ T^n
%\{ (0,0), (1,1) \} \quad \text{ and } \quad \{ (0,1), (1,0) \}.
\]
containing $n$-tuples where $\pi$ appears an even (resp.~odd) number of times.
Template $T_2$ is 2-Helly, while for $n > 2$, template $T_n$ is not. Indeed, a (2,3)-anomaly is admitted by 
an instance over $T_3$, consisting of three elements $a_1, a_2, a_3$ and four constraints:
\[
\pi(a_1) \qquad \pi(a_2) \qquad \pi(a_3) \qquad R_\text{even}(a_1, a_2, a_3).
\]
\end{example}

Consider a relation $R \subseteq G_1 \times \ldots \times G_n$ in a coset template, and an instance $I$.
For a constraint $R(a_1, \ldots, a_n)$ in $I$ and $i \in \{1 \ldots n\}$, we call $G_i$ a \emph{constraining group} of $a_i$.
In order to have a solution, an instance $I$ has to be non-contradictory, in the sense that every element must have exactly one constraining group
(elements with no constraining group may be safely removed from $I$).
We only consider non-contradictory instances from now on.

Consider a fixed coset template $T$ and an instance $I$ over $T$.
By a \emph{pre-solution} of $I$ we mean any function $s : I \to T$ that maps every element $i \in I$ to an element of the constraining group of $i$. 
Pre-solutions of an instance $I$ form a group, % $\presol{I}$, 
with group operation defined point-wise.
One can also speak of partial pre-solutions, whose domain is a subset of $I$.
Using an implicit order of elements of an instance, 
(partial) pre-solutions of $I$ are elements of the direct product of constraining groups of (some) elements of $I$. 

We distinguish \emph{subgroup instances}, where all relations $R$ appearing in the constraints $R(a_1, \ldots, a_n)$ 
are subgroups, instead of arbitrary cosets.
\begin{fact}  \label{fact:solutions coset}
%Fix a coset template.
(1) The set of all solutions $\cal H$ of an instance $I$, if nonempty, is a coset in the group of pre-solutions.
%$\presol{I}$.  %the direct product of constraining groups.
(2) In consequence, if $I$ is a subgroup instance then $\cal H$ is a subgroup of the group of pre-solutions.
% $\presol{I}$.  % the direct product of constraining groups.
\end{fact}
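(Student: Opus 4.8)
The plan is to view each constraint as a homomorphic preimage condition on the group $P$ of pre-solutions of $I$, and then invoke the elementary fact that a nonempty intersection of cosets of subgroups is again a coset. Recall that $P$ is the direct product $\prod_{i \in I} G_i$, where $G_i$ denotes the (unique, since $I$ is non-contradictory) constraining group of $i$, equipped with the pointwise group operation. For a constraint $c = R(a_1, \ldots, a_n)$ of $I$, consider the evaluation map $p_c \colon P \to G_1 \times \cdots \times G_n$ sending a pre-solution $s$ to $(s(a_1), \ldots, s(a_n))$, where $G_1, \ldots, G_n$ are now the constraining groups of $a_1, \ldots, a_n$ (these are forced to match the carrier groups on which $R$ is defined, and coincide whenever the corresponding $a_i$ coincide). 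Since the operation on $P$ is pointwise, $p_c$ is a group homomorphism.

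The next step is to observe that the set of pre-solutions satisfying $c$ is precisely $p_c^{-1}(R)$, and that this set, when nonempty, is a right coset in $P$ of the subgroup $L_c := p_c^{-1}(K)$, where $K \leq G_1 \times \cdots \times G_n$ is the subgroup of which $R$ is a right coset (such a $K$ exists because relations of a coset template are cosets). Indeed, if $s_0 \in p_c^{-1}(R)$, then for every $s \in P$ one has $p_c(s) \in R \iff p_c(s)\, p_c(s_0)^{-1} \in K \iff p_c(s s_0^{-1}) \in K \iff s \in L_c s_0$, so that $p_c^{-1}(R) = L_c s_0$. Now $\mathcal{H}$ is the intersection $\bigcap_c p_c^{-1}(R)$ over all (finitely many) constraints $c$ of $I$. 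Assuming $\mathcal{H} \neq \emptyset$, fix $h_0 \in \mathcal{H}$; then $h_0 \in p_c^{-1}(R) = L_c h_0$ for every $c$, hence $p_c^{-1}(R) = L_c h_0$, and therefore $\mathcal{H} = \bigcap_c L_c h_0 = \big(\bigcap_c L_c\big) h_0$, a right coset in $P$ of the subgroup $\bigcap_c L_c$. This establishes part (1).

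For part (2), if $I$ is a subgroup instance then every relation $R$ occurring in a constraint is itself a subgroup of the pertinent direct product of constraining groups, so it contains the identity tuple. Hence the identity pre-solution, mapping each element of $I$ to the identity of its constraining group, satisfies all constraints and is a solution; thus $\mathcal{H} \neq \emptyset$ and $1 \in \mathcal{H}$. Applying part (1) with $h_0 = 1$ yields $\mathcal{H} = \big(\bigcap_c L_c\big) \cdot 1 = \bigcap_c L_c$, a subgroup of $P$.

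I do not expect a real obstacle: the argument is essentially bookkeeping around the standard facts that the preimage of a right coset under a group homomorphism is a right coset, and that a nonempty intersection of right cosets of subgroups is a right coset of the intersection of those subgroups. The only point needing a word of care is the degenerate case in which some of the $a_i$ coincide, so that $p_c$ need not be surjective and $p_c^{-1}(R)$ may be empty — but that is precisely the excluded case $\mathcal{H} = \emptyset$, so it does no harm.
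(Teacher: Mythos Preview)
Your proof is correct and follows essentially the same approach as the paper: show that the pre-solutions satisfying a single constraint form a coset in the group of pre-solutions, then use closure of cosets under nonempty intersection, and for (2) observe that the identity tuple is a solution of a subgroup instance. The only difference is that you spell out in detail (via the homomorphism $p_c$ and the preimage argument) what the paper simply asserts in one line.
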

\begin{proof}
To show (1)
observe that for every constraint $c = R(a_1, \ldots, a_n)$ in $I$, the set of all pre-solutions ${\cal H}_c$ satisfying that particular constraint is a coset in the group of pre-solutions.
As solutions ${\cal H}$ are exactly the intersection,
\[
{\cal H} \ = \ \bigcap_c \ {\cal H}_c,
\]
for $c$ ranging over all constraints in $I$, 
by closure of cosets under nonempty intersection we derive that $\cal H$ is a coset.

(2) follows by an observation that the tuple $(1, \ldots, 1)$ of identities is always a solution, in case of
a subgroup instance. 
\end{proof}

%Wlog.~we may assume that every variable in a formula~\eqref{eq:ppformula} specifying a pp-definable set,
%appears in some atomic proposition. % (intuitively, the variable is constrained).
%Then 
Every pp-definable relation is essentially a projection of the set of solutions of some instance
(variables are element of the instance, and atomic propositions are its constraints), and
by Fact~\ref{fact:solutions coset} we derive the following corollary:
\begin{fact} \label{fact:action ppdef}
(1) Every pp-definable relation $R \subseteq G_1 \times \ldots \times G_n$ in $T$ 
is a coset in $G_1 \times \ldots \times G_n$.
(2) If $R$ is pp-definable and $\pi \in G_1 \times \ldots \times G_n$ then $R \, \pi$ is pp-definable as well.
\end{fact}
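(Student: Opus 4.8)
The plan is to exhibit every pp-definable relation as a projection of the solution set of a suitable instance, and then to transport Fact~\ref{fact:solutions coset} through this projection and, for part~(2), through a controlled right shift of the constraints. Two elementary group-theoretic observations will be used: a homomorphic image of a coset is again a coset, and right translation by a fixed element commutes with intersections.

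For part~(1), start from a pp-formula $\phi$ as in~\eqref{eq:ppformula} defining a nonempty relation $R$, and build the instance $I_\phi$ whose elements are the variables $x_1, \ldots, x_{n+m}$ and whose constraints are exactly the atomic conjuncts $\psi_1, \ldots, \psi_l$. Because $R$ is nonempty, $\phi$ has a satisfying valuation; this valuation pins each variable of $I_\phi$ to a single carrier group, so $I_\phi$ is non-contradictory and, in particular, $x_1, \ldots, x_n$ are constrained to $G_1, \ldots, G_n$ respectively. The group of pre-solutions of $I_\phi$ is then the direct product of the constraining groups, with $G_1, \ldots, G_n$ as its first $n$ factors, and by construction $R$ is the image of the solution set $\mathcal{H}$ of $I_\phi$ under the projection homomorphism onto those factors. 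By Fact~\ref{fact:solutions coset}(1), $\mathcal{H}$ is a coset, hence so is its homomorphic image $R$.

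For part~(2), keep the elements of $I_\phi$ and shift its constraints. Let $\hat{\pi}$ be the pre-solution of $I_\phi$ with $\hat{\pi}(x_i) = \pi_i$ for $i \leq n$ and $\hat{\pi}(x_i) = 1$ for $i > n$. Replace every constraint $c = R'(x_{i_1}, \ldots, x_{i_j})$ of $I_\phi$ by $c' = (R'\,\sigma_c)(x_{i_1}, \ldots, x_{i_j})$ with $\sigma_c = (\hat{\pi}(x_{i_1}), \ldots, \hat{\pi}(x_{i_j}))$; the third condition in the definition of coset templates guarantees that $R'\,\sigma_c$ is again a relation of $T$, so the resulting instance $I'_\phi$ is legitimate. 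Writing $\mathcal{H}_c$ for the set of pre-solutions satisfying $c$, the substitution $s \mapsto s\,\hat{\pi}$ carries $\mathcal{H}_c$ onto $\mathcal{H}_{c'}$, since $(s\,\hat{\pi})(a) = s(a)\,\hat{\pi}(a)$; intersecting over all constraints, the solution set of $I'_\phi$ is $\mathcal{H}\,\hat{\pi}$, whose projection onto the first $n$ factors is exactly $R\,\pi$. The pp-formula read off from $I'_\phi$ therefore defines $R\,\pi$.

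The only step that demands care is this last verification in part~(2): one has to check that shifting each constraint locally by the restriction of $\hat{\pi}$ to its scope reproduces the globally shifted solution set $\mathcal{H}\,\hat{\pi}$, bearing in mind that $\hat{\pi}$ is trivial on the existentially quantified variables but nontrivial on the output variables, so $\sigma_c$ is generally nontrivial in only some coordinates of a given constraint. Everything else follows from Fact~\ref{fact:solutions coset} and the two group-theoretic observations noted at the outset.
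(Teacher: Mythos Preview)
Your argument is correct and follows the paper's line: realize a pp-definable relation as the projection of the solution set of an instance, invoke Fact~\ref{fact:solutions coset}(1), and use that projections (surjective homomorphisms) send cosets to cosets. The paper's own proof is just the single sentence preceding the statement, so you have simply filled in the details it elides.

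For part~(2) your construction of the shifted instance $I'_\phi$ is precisely the action $I_\phi \cdot \hat{\pi}$ that the paper introduces immediately \emph{after} this Fact, in the subsection ``Action of pre-solutions''; your verification that the solution set becomes $\mathcal{H}\,\hat{\pi}$ is the bidirectional form of the paper's Fact~\ref{fact:action solution}. So you have anticipated, rather than diverged from, the paper's machinery. The only cosmetic remark is that your ``right translation commutes with intersections'' is automatic once you observe that $s \mapsto s\,\hat{\pi}$ is a bijection of the pre-solution group, so the careful coordinate-by-coordinate check you flag at the end is not really delicate.
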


We will later exploit the following property of coset templates:
\begin{lemma} \label{lem:anomaly}
If some subgroup instance admits a $(k,k+1)$-anomaly, for $k \geq 2$, 
then some subgroup instance admits a $(k-1, k)$-anomaly.
\end{lemma}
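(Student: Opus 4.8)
The plan is to \emph{contract} the given anomaly into one over a domain of size smaller by one, by deleting a single element from the domain of the anomaly and compensating with one extra pinning constraint. Fix a subgroup instance $I$ carrying a $(k,k+1)$-anomaly $h$, and write $D = \dom{h} = \{a_0, a_1, \dots, a_k\}$. The instance we aim at is $I'$, obtained from $I$ by adding the single constraint $\{1\}(a_0)$, together with the candidate anomaly $g := \restr{h}{\{a_1,\dots,a_k\}}$, which has size $k$.

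The first step is a normalisation: we may assume $h(a_0) = 1$. Indeed, since $k\geq 2$, the restriction of $h$ to the $k$-element set $D\setminus\{a_1\}$, which contains $a_0$, extends to a solution $s$ of $I$, so $s(a_0)=h(a_0)$; replacing $h$ by the pointwise product $h\cdot(\restr{s}{D})^{-1}$ yields a function sending $a_0$ to $1$, and this is still a $(k,k+1)$-anomaly of $I$. This last point is where the coset structure enters, through the (easily verified) fact that for a subgroup instance the partial solutions with any fixed domain form a subgroup of the product of the corresponding constraining groups, by the argument proving Fact~\ref{fact:solutions coset}: thus $h\cdot(\restr{s}{D})^{-1}$ is again a partial solution with domain $D$; if it extended to a solution $t$, then $t\cdot s$ would be a solution of $I$ extending the original $h$; and if the original $\restr{h}{D\setminus\{a_i\}}$ extended to a solution $u$, then $u\cdot s^{-1}$ witnesses extendability of the corresponding restriction of the translated anomaly.

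Next we verify that $g$ is a $(k-1,k)$-anomaly of the subgroup instance $I'$; note that $I'$ is a subgroup instance, since $\{1\}$ is the trivial subgroup and is a relation of every coset template, and $g$ is a partial solution of $I'$ because the new constraint does not mention any element of $\dom{g}$. First, $g$ does not extend to a solution of $I'$: such an extension would be a solution $t$ of $I$ with $t(a_0)=1=h(a_0)$ and $\restr{t}{\{a_1,\dots,a_k\}}=g=\restr{h}{\{a_1,\dots,a_k\}}$, hence $\restr{t}{D}=h$, contradicting the choice of $h$. Second, for every $j\in\{1,\dots,k\}$ the $(k-1)$-element restriction $\restr{g}{\{a_1,\dots,a_k\}\setminus\{a_j\}}$ does extend to a solution of $I'$: the set $D\setminus\{a_j\}$ is a $k$-element subset of $\dom{h}$, so $\restr{h}{D\setminus\{a_j\}}$ extends to a solution $u$ of $I$, and since $a_0\in D\setminus\{a_j\}$ we get $u(a_0)=h(a_0)=1$, so $u$ is in fact a solution of $I'$ and it extends the restriction in question.

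The one delicate point — the step I expect to be the real obstacle — is keeping the whole construction inside the class of \emph{subgroup} instances: an arbitrary pinning constraint $\{c\}(a_0)$ is over a subgroup only when $c=1$, which is precisely why the normalisation step is needed and why it must be performed by translating inside the group of (partial) solutions, rather than by purely combinatorial manipulation of the anomaly. Everything else is bookkeeping about which subsets of $D$ contain the deleted element $a_0$.
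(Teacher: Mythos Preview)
Your proof is correct and follows essentially the same approach as the paper: add the pinning constraint $\{1\}(a_0)$, translate $h$ by the inverse of a solution agreeing with $h$ at $a_0$, and restrict to the remaining $k$ elements. The only cosmetic difference is that you normalise $h$ first so that $h(a_0)=1$ and then restrict, whereas the paper defines the translated partial solution $h' = h\cdot \bar h^{-1}$ directly (using a solution $\bar h$ that agrees with $h$ only at the single point $a_0$) and then restricts; your closing remark about why the pin must be at the identity, and hence why the translation step is essential, is exactly the point.
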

%
% \noindent Hence some instance admits a $(2, 3)$-anomaly.
%
\begin{proof} % [Proof of Lemma~\ref{lem:anomaly}]
Fix a $(k, k+1)$-anomaly $h$ in a subgroup instance $I$, for some $k \geq 2$, and choose an arbitrary element $a \in \dom{h}$.
Let $X = \dom{h} \setminus\{a\}$. 
Define the new instance $I'$, with the same domain as $I$, whose constraints are all constraints of $I$ 
plus one additional unary constraint $1(a)$ requiring that $a$ should be mapped to the identity in its constraining group. 

As $h$ is an anomaly, the restriction $\restr{h}{\{a\}}$ extends to a solution of $I$, i.e.~$I$ has a solution $\bar h$ satisfying
${\bar h}(a) = h(a)$. 
Using an arbitrary such solution we define another partial solution $h'$ of $I$ with $\dom{h'} = \dom{h} = X\cup\{a\}$:
\[
h'(x) \ = \ h(x) \cdot \bar h^{-1}(x), \quad \text{ for } x\in X\cup\{a\}. 
\]
% As $\bar h(a) = 1$, $h'$ is also a partial solution of $I'$.
%As $h$ is a $(k, k+1)$-anomaly in $I$, using Fact~\ref{fact:action solution} we conclude that 
%$h'$ is also a $(k, k+1)$-anomaly in $I$.
%
Consider the restriction $h'' = \restr{h'}{X}$.
We claim that $h''$ is a $(k-1, k)$-anomaly in $I'$.
Indeed, for every subset $X' \subseteq X$ of size $k-1$, 
$\restr{h}{X'\cup\{a\}}$ extends to a solution of $I$, hence
$\restr{h'}{X'\cup\{a\}}$ extends to a solution of $I'$, 
and hence $\restr{h''}{X'}$ also extends to a solution of $I'$.
Moreover, $h$ does not extend to a solution of $I$, hence
$h'$ does not extend to a solution of $I'$, and thus $h''$ also does not extend to a solution of $I'$,
as every solution of $I'$ is forced to map $a$ to $1$.
\end{proof}

\subsection{Action of pre-solutions}

For a fixed instance $I$, 
define the action of pre-solutions on (partial) pre-solutions (thus in particular on (partial) solutions).
For a (partial) pre-solution $h : I \to T$ and a pre-solution $s$, let $h \cdot s$ be defined by the point-wise group operation:
\[
(h \cdot s)(a) = h(a) \, s(a), \quad \text{ for } a \in \dom{h}.
\]
We will apply the action to the instance $I$ itself: let
$
I \cdot s
$
be an instance with the same carrier set as $I$, whose constraints are obtained from the constraints of $I$ as follows:
for every constraint $R(a_1, \ldots, a_n)$ of $I$, 
the instance $I \cdot s$ contains a constraint
\[
(R \pi)(a_1, \ldots, a_n), \qquad \text{ where } \pi = (s(a_1), \ldots, s(a_n)).
\]
%As $R$ is a coset in the direct product of constraining groups of $a_1, \ldots, a_n$, so is $R'$.
Note that the action preserves constraining groups, and hence pre-solutions, of an instance.

It is important to notice that solvability is invariant under the action of pre-solutions: 
\begin{fact} \label{fact:action solution}
If h is a solution of $I$ then $h \cdot s$ is a solution of $I \cdot s$.
\end{fact}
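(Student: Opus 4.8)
The plan is to verify the claim constraint by constraint, simply unwinding the definitions of the action on instances and on (partial) solutions. First I would observe that $h \cdot s$ is indeed a pre-solution of $I \cdot s$: for each $a \in I$ both $h(a)$ and $s(a)$ lie in the constraining group of $a$, hence so does their product $(h\cdot s)(a) = h(a)\, s(a)$, and the action on instances preserves constraining groups (as already noted just before the statement).

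Next I would fix an arbitrary constraint of $I \cdot s$. By construction it has the form $(R\pi)(a_1, \ldots, a_n)$, where $R(a_1, \ldots, a_n)$ is a constraint of $I$ and $\pi = (s(a_1), \ldots, s(a_n))$. Since $h$ is a solution of $I$, the tuple $(h(a_1), \ldots, h(a_n))$ lies in $R$. Multiplying on the right by $\pi$ in the direct product $G_1 \times \cdots \times G_n$ of the constraining groups of $a_1, \ldots, a_n$ yields
\[
(h(a_1)\, s(a_1), \ldots, h(a_n)\, s(a_n)) \ \in \ R\pi ,
\]
and the left-hand tuple is exactly $((h\cdot s)(a_1), \ldots, (h\cdot s)(a_n))$ by the definition of the action on partial solutions. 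Hence $h\cdot s$ satisfies $(R\pi)(a_1, \ldots, a_n)$. Since every constraint of $I\cdot s$ arises from a constraint of $I$ in precisely this way, $h\cdot s$ satisfies all constraints of $I\cdot s$, i.e.~it is a solution of $I\cdot s$.

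There is essentially no obstacle here: the statement is an immediate consequence of the fact that right multiplication by a fixed tuple sends the coset $R$ to the coset $R\pi$ componentwise, and that this componentwise operation is compatible with the pointwise group structure on pre-solutions. The only point deserving a moment's care is the bookkeeping: one must check that the tuple $\pi$ used to define the shifted constraint in $I\cdot s$ coincides with the tuple used to shift $h$ into $h\cdot s$ on the arguments $a_1,\dots,a_n$ — which it does, both being read off from the same pre-solution $s$ at those arguments.
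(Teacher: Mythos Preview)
Your proof is correct and is precisely the natural unwinding of the definitions. The paper itself gives no proof of this fact at all, treating it as immediate from the definition of the action on instances and on (partial) solutions; your write-up simply makes explicit the one-line verification that right multiplication by $\pi = (s(a_1),\ldots,s(a_n))$ carries $R$ to $R\pi$ componentwise.
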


\section{Local equivariant algorithms}   %The result}

\newcommand{\data}{{\cal D}}
\newcommand{\supp}{{\cal S}}

In the following we consider deterministic algorithms which run in \emph{stages}, and in every $i$th stage
a new object $\data_i(I)$ is computed as a function of the instance $I$ and previously computed objects $\data_1(I), \ldots, \data_{i-1}(I)$.
Thus an execution of an algorithm can be described as a sequence of $n(I)$ objects
\[
\data_1(I), \ \data_2(I),  \ \ldots, \ \data_{n(I)}(I).
\]
The outcome of an algorithm is determined by the final object $\data_{n(I)}(I)$.

We assume some action of pre-solutions $s$ of $I$ on the objects $\data_i(I)$, written $\data_i(I) \cdot s$.
An algorithm is called \emph{equivariant} when it commutes with the action:  for every pre-solution $s$,
\begin{align*}
n(I\cdot s) \ & = \ n(I) \\
\data_i(I \cdot s) \ & = \ \data_i(I) \cdot s, \ \text{ for every } i\leq n(I).
\end{align*}
%if, given an instance $I$, the algorithm reaches a global state ${\cal S} = (\data_1, \ldots, \data_i)$ via a sequence of stages,
%then for every pre-solution $s$ of $I$, given the instance $I \cdot s$, the algorithm reaches
%the global state ${\cal S} \cdot s = (\data_1 \cdot s, \ldots, \data_i \cdot s)$ via the same number of stages:
%\[
%\xymatrix{ {I} \ar@{|.>}[rr] \ar@{|->}[d]^s && {\cal S} \ar@{|->}[d]^s \\
%{I} \cdot s \ar@{|.>}[rr] && {\cal S} \cdot s
%}
%\]
%As solvability is invariant under action of pre-solutions, 
%a global state $\cal S$ of a correct equivariant algorithm is accepting (rejecting) if and only if the global state ${\cal S} \cdot s$ is so.
%
We say that $s$ \emph{fixes} $\supp \subseteq I$ when $s(x) = 1$ for all $x\in \supp$.
An equivariant algorithm is \emph{local} if there is a locality bound $d\in\nat$ independent from an instance,  such that
for every instance $I$ and $i \leq n(I)$, either (L0) holds, or both (L1) and (L2) hold:
\begin{itemize}
\item[(L0)] There is a subset $\supp \subseteq I$ of size at most $d$ such that $\data_i(I)$ depends only on the restriction of $I$ to $\supp$.
\end{itemize}
\begin{itemize}
\item[(L1)] $\data_i(I)$ depends only on at most $d$ objects among $\data_1(I), \ldots, \data_{i-1}(I)$.
%\vspace{-2mm} 
\item[(L2)] There is a subset $\supp \subseteq I$ of size at most $d$ such that $\data_i(I) \cdot s = \data_i(I)$ for every pre-solution $s$ fixing $\supp$. 
\end{itemize} 

The last condition (L2) is motivated by sets with atoms~\cite{BKL11,BKLT13} -- it corresponds to bounded support.
Roughly speaking, (L2) says that $\data_i(I)$ is only related to a bounded number of elements of $I$.
%We refrain from a further formalization of algorithms.  
For illustration, we demonstrate that local consistency algorithms, as well as the decision procedures of fix-point extensions of first-order logic, are equivariant and local.

\subsection{Local consistency algorithms}
Consider an instance $I$ and a family $\cal H$ of its partial solutions of size at most $k$, for some $k > 0$.
It will be convenient to split $\cal H$ into the subfamilies ${\cal H}_X$, 
where ${\cal H}_X = \{ h \in {\cal H} : \dom{h} = X\}$.
Fix $k$ and  $l \geq k$, and consider two subsets $X \subseteq Y$ of $I$ of size $k$ and $l$, respectively.
A partial solution $h \in {\cal H}_X$ is \emph{consistent} wrt.~${\cal H}$ and $(X, Y)$ if
\begin{quote}
$h$ extends to a partial solution $h'$ with $\dom{h'} = Y$,
whose restriction $\restr{h'}{X'}$ to every subset $X' \subseteq Y$ of size at most $k$ belongs to ${\cal H}_{X'}$.
\end{quote}
The \emph{$(k,l)$-consistency} algorithm takes as input an instance $I$ over $T$, and 
 computes the greatest family $\cal H$ of partial solutions of size at most $k$, such that 
every $h \in {\cal H}_X$  is consistent wrt.~$\cal H$ and $(X, Y)$, for every $X$ and $Y$ as above.
The algorithm starts with $\cal H$ containing all partial solutions of size $k$, and
 proceeds by iteratively removing from $\cal H$ all partial solutions  
$h$ that falsify the consistency condition.
The order of removing is irrelevant, but in order to guarantee equivariance we assume some fixed enumeration
$
(X_1, Y_1), \ldots, (X_n, Y_n)
$
of all pairs $(X, Y)$ of subsets of $I$ as above, that only depends on the size of the input $I$, but not on the constraints in $I$, 
and that the algorithm proceeds by iteratively executing the following subroutine until stabilization:
\begin{align*}
& \text{for } i = 1, 2, \ldots, n,\\
& \quad {\cal H}_{X_i} \ := \ \{ h \in {\cal H}_{X_i} : h \text{ is consistent wrt.~} H \text{ and } (X_i, Y_i) \} 
\qquad (*)
%\label{eq:stage}
\end{align*}
Every single update $(*)$ of ${\cal H}_{X_i}$, for some pair $(X_i, Y_i)$, constitutes a stage of the algorithm.

%Each stage applies to a subset $X$ of an input instance of size at most $k$, 
%and a superset $Y \supseteq X$ of size $l$, and evaluates the consistency condition for the pair $(X, Y)$.
%As the result of a stage, all partial solution $h \in {\cal H}_X$ falsifying the consistency condition for $(X,Y)$ are
%removed from ${\cal H}_X$.
%%\[ \{ h \in {\cal H}_X \ : \ (h, Y) \text{ satisfies the } (k, l)\text{-consistency} \} \]

When the stabilization is reached and $\cal H$ is nonempty then all the subfamilies ${\cal H}_X$ are also nonempty.
The \emph{$(k,l)$-consistency} algorithm accepts if the family $\cal H$ computed by the algorithm is nonempty; otherwise, the algorithm rejects.

%If an input instance $I$ is solvable (i.e.~admits a solution) then the $(k, l)$-consistency algorithm accepts, for every $k \leq l$.
%We say that a template $T$ has width $(k, l)$ if  
%the $(k, l)$-consistency algorithm correctly solves CSP($T$), % checks for existence of a solution on all instances over $T$, 
%namely the algorithm answers positively only if the input instance $I$ is solvable.
%$T$ has \emph{bounded width} if it has some width $(k, l)$.
%
%Observe that the $(k, l)$-consistency algorithm ignores relations in $T$ of arity greater than $k$.
%Therefore, if $T$ has width $(k, l)$ then wlog.~one can assume that 
%the arities of relations in $\Sigma$ are at most $k$.

%\emph{Strict width 2} is a stenghtening of width $(2, 3)$.
%Consider the family $\cal H$ computed by the $(2, 3)$-consistency algorithm.
%Call a partial function  $h : I \to T$ \emph{$\cal H$-consistent} if for every $\{a, b\} \subseteq \dom{h}$, $\restr{h}{\{a, b\}} \in {\cal H}_{\{a, b\}}$.
%In particular, every solution is consistent, but not every partial solution is so in general.
%A template $T$ has strict width 2 if for every $I$ for which  the $(2, 3)$-consistency algorithm computes 
%a nonempty family $\cal H$, 
%every $\cal H$-consistent partial function $I \to T$ extends to a solution of $I$.
%(The intuition is as follows: a solution can be build in a "greedy" way, from the empty partial function, 
%by iteratively extending the domain of the function with one point, preserving the invariant that
%the function is $\cal H$-consistent.)
%Clearly strict width 2 implies width $(2, 3)$.

Lift the action of pre-solutions to subfamilies ${\cal H}_X$ by direct image: ${\cal H}_X \cdot s = \{ h \cdot s : h \in {\cal H}_X\}$.
The $(k, l)$-consistency algorithm is equivariant then:
writing ${\cal H}_X(I, i)$ for the value of ${\cal H}_X$ after the $i$th stage of execution on
an instance $I$, we have:
\[
%\forall \ X, \quad 
{\cal H}_X(I \cdot s, i) \ = \ {\cal H}_X(I, i) \cdot s,
\]
for every $I$, every its pre-solution $s$, every $X\subseteq I$ and every $i$.
Furthermore, the $(k, l)$-consistency algorithm can be easily turned into a local one. Initially, subfamilies ${\cal H}_X$ satisfy (L0) for $d = k$. Otherwise, 
(L1) clearly holds, as the new value of ${\cal H}_{X_i}$ computed in one stage $(*)$ depends only on
\begin{align*} \label{eq:depends on}
%o(k, l) \ = \  
d \ = \ {l \choose k} \ + \ {l \choose {k-1}} \ + \ \ldots \ + \ {l \choose 1}
\end{align*}
values of ${\cal H}_{X'}$, for subsets $X' \subseteq Y_i$ of size at most $k$.  % of some superset $Y$ of $X$ of size $l$.
(L2) holds too: ${\cal H}_X(I, i) \cdot s \ = \ {\cal H}_X(I, i)$ for all pre-solutions fixing $X$.

\subsection{Fix-point logics}

There are many fix-point extensions of first-order logic. The logic LFP offers a construct of the least fix-point of a function definable by a formula. Here is an example formula:
\[
%\forall u, v \ [
\phi(u, v) \ \equiv \ \text{\sc LFP}_{R, x, y} \big[ E(x, y) \lor \exists z\  (E(x, z) \land R(z, y)) \big](u, v).
%](u, v)
\]
The formula has two free variables $u, v$ and defines the transitive closure of a binary relation $E$. 
As a further example, the formula $\forall x, y \ \phi(x, y)$ defines strong connectedness.

Evaluation of a formula of the form $\text{\sc LFP}_{R, \vec x}$ amounts to an iterative computation of the set of valuations of the variables $\vec x$, starting from the empty set of valuations, until stabilization. Given an arbitrary LFP formula $\phi$, a set of valuations is to be computed for every subformula of $\phi$. This can be turned into a stage-based local algorithm as follows. Let $\phi$ be a fixed LFP formula. The algorithm computes 
the sets ${\cal H}_X$, indexed by finite tuples $X \in I^*$ of elements of an instance $I$, such that for each $X = (a_1, \ldots, a_n)$, the set ${\cal H}_X$ contains a set of subformulas $\psi$ of $\phi$ for which $\psi(a_1, \ldots, a_n)$ holds.  The length of tuples $X$ is bounded by the greatest number of free variables of a subformula $\phi$; and every update of a set ${\cal H}_X$ only depends on a bounded number of other sets.
Therefore, the decision procedure for $\phi$ is local. It is also equivariant, as renaming the relations in $I$ into relations in $I\cdot s$ does not affect the iterative computation.

In the similar vein one argues that more expressive logics, like IFP (where the computation of fix-points is performed in the inflationary manner) or IFP+C (extension of IFP with counting), yield local and equivariant decision procedures as well.

\subsection{Non-equivariant algorithms}

As expected, many algorithms fail to satisfy equivariance.
As a first example, consider a naive ineffective algorithm that enumerates all pre-solutions $h$ and tests each for being a solution.
Enumerating and processing pre-solutions can be performed element-wise, thus leaving a hope for locality.
However, equivariance is violated. Indeed, suppose that on an instance $I$, the values $h(x)$ of pre-solution $h$ for an element $x\in I$ are enumerated in the order $\pi_1, \pi_2, \ldots$; then on an instance $I\cdot s$, the values $h(x)$ would have to be enumerated in the different order $\pi_1 \cdot s, \pi_2 \cdot s, \ldots$, which is not the case.

A similar phenomenon emerges in the polynomial-time algorithm for solving CSP over a template admitting a Malcev polymorphism, designed in~\cite{BD06} (or its generalized variant~\cite{dalmau}). The algorithm is applicable to coset templates as they admit a Malcev polymorphism, defined as $\phi(x, y, z) = x y^{-1} z$ (whenever $x, y, z$ are elements of the same carrier group). Roughly speaking, the algorithm manipulates a set of pre-solutions of an instance (succinctly represented by polynomially many representatives). For some fixed ordering $c_1, \ldots, c_n$ of all constraints in an instance, in its $k$th phase the algorithm computes the pre-solutions satisfying the constraints $c_1, \ldots, c_k$. 
Even if the core operation performed by the algorithm, namely computation of the Malcev polymorphism $\phi$, is equivariant, that is
\[
\phi(x \cdot s, y \cdot s, z \cdot s) \ = \ \phi(x, y, z) \cdot s,
\]
the whole algorithm is not so. 
As above, responsible for non-equivariance is enumeration of all elements of the template.

%On the other hand, if processing of total pre-solutions was split into coordinate-wise processing of small partial ones, 
%then locality would be recored, but equivariance would be sacrificed:

%
%a fixed formula  $\phi \equiv \exists X \ \psi(X)$ in existential monadic second order logic.
%Suppose that the decision procedure for $\phi$ enumerates all subsets of the instance $I$.
%As the size of subsets is not bounded, the algorithm violates locality. If, instead, manipulation of subsets is done in a point-wise manner, the algorithm can be turned in a local one which, however, fails to be equivariant in general.
%%
%Thus, depending on the 'granularity' of computations, the decision procedure would violate either locality or equivariance.

\section{Local equivariant algorithm implies 2-Helly}
% Proof of Theorem~\ref{thm:bw2sw2}}  % Bounded width implies 2-Helly} 
\label{sec:proof}

%As our main technical result we prove:
%
\begin{theorem}  \label{thm:bw2sw2}
For a coset template $T$, if CSP($T$) is solvable by a local equivariant algorithm then
% bounded width implies (and is thus equivalent to) 
$T$ is 2-Helly.
\end{theorem}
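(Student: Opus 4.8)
The plan is to prove the contrapositive: assuming $T$ is not $2$-Helly, construct an instance on which no local equivariant algorithm can decide CSP($T$) correctly. By Lemma~\ref{lem:iffanomaly} and Lemma~\ref{lem:anomaly}, failure of $2$-Helly for the coset template $T$ yields a subgroup instance admitting a $(2,3)$-anomaly; by Fact~\ref{fact:solutions coset}, the associated data live in groups. I would extract from the $(2,3)$-anomaly the group-theoretic heart of the obstruction, namely a subgroup $H \leq G_1 \times G_2 \times G_3$ that is an almost-direct product in the sense of Definition~\ref{def:almost direct product}: the three projection conditions~\eqref{eq:stronganomaly2}--\eqref{eq:stronganomaly4} encode that every $2$-element restriction of the anomaly extends to a solution, while~\eqref{eq:stronganomaly1} encodes that the full partial solution does not. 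Passing to the quotient via Lemma~\ref{lem:strictquotient} makes this a \emph{strict} almost-direct product, and by Lemma~\ref{lem:abelian} we may assume all groups involved are abelian.

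Next I would build, for each $n$, a ``gadget'' instance $I_n$ out of many copies of this strict almost-direct product, glued along shared elements in the pattern of a large expander-like graph — this is the CFI-style construction alluded to in the introduction. The key feature to engineer: $I_n$ comes in two variants, a ``solvable'' one and an ``unsolvable'' one, which differ only by a local twist (acting by some pre-solution $s$ on the constraints attached to a single element or a single edge), yet whose solvability genuinely differs because the twist cannot be ``pushed around the loop'' and cancelled — exactly because strictness forces the groups abelian and the parity of the twist around a cycle is an invariant. Crucially, the two variants should be indistinguishable to any map with bounded support: any pre-solution $s$ with support of size $\leq d$ can be absorbed by re-routing the twist through the part of the gadget not touching $\mathrm{supp}(s)$, so $I_n \cdot s$ is isomorphic (as an instance) to $I_n$ for $n$ large compared to $d$.

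With the gadget in hand, the final step is a standard indistinguishability argument. Suppose a local equivariant algorithm $\mathcal A$ with locality bound $d$ solves CSP($T$). Run it on the solvable variant and on the unsolvable variant of $I_n$ for $n \gg d$. By induction on the stage number $i$, using (L0)/(L1)/(L2) together with equivariance and Fact~\ref{fact:action solution}, one shows that each object $\data_i$ computed on the two variants is related by the action of a pre-solution with bounded support — hence the algorithm's final objects, and therefore its answers, agree on the two variants, contradicting that one is solvable and the other is not. The main obstacle I expect is the middle step: designing the gluing pattern of the gadget so that (a) solvability really does change between the two variants (this is where strictness/commutativity and a genuine parity invariant must be exploited, generalizing the CFI argument from $\Z_2$ to the arbitrary abelian quotient group $\coset{H}$), while simultaneously (b) bounded-support pre-solutions cannot detect the change — the tension between these two requirements is precisely what CFI-type constructions are built to resolve, and adapting it to the coset-template setting with arbitrary carrier groups is the delicate part.
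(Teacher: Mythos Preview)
Your plan is essentially the paper's proof: contrapositive, reduce to a $(2,3)$-anomaly in a subgroup instance, extract an almost-direct product, build a CFI-style family of instances from it, and use locality plus equivariance to show the algorithm cannot distinguish a solvable from an unsolvable variant. Two refinements the paper makes are worth flagging.

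First, the almost-direct product must be usable \emph{inside} CSP($T$): the paper does not work with the raw $H$ but carefully defines subgroups $S_1,S_2,S_3$ and $R = H \cap (S_1\times S_2\times S_3)$ so that $R$ and every coset $R\pi$ are pp-definable (Proposition~\ref{prop:almost direct product}). Without this you cannot plant the gadget as an instance over $T$. Also, the quotient to the strict/abelian group is used only to prove unsolvability of the twisted variant (Proposition~\ref{prop:unsolvable}); the instances themselves are built from $R$, not from $\coset{R}$, so ``we may assume abelian'' is not literally true at the instance level.

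Second, your inductive invariant ``$\data_i$ on the two variants are related by the action of some bounded-support pre-solution'' does not obviously survive (L1): the $d$ previous objects may each require a \emph{different} bounded-support $s$, and combining them can blow up the support. The paper sidesteps this by tracking a stronger, composable invariant: after every stage the algorithm \emph{ignores} all but a fixed number $m$ of positions (Proposition~\ref{prop:does not observe}). The induction then uses the geometric Fact~\ref{lem:small-triangulation} about the $n$-torus (removing few positions leaves one huge connected component) together with a neighbor-propagation argument (if a position is ignored, so are its neighbors outside $\supp$) to keep the number of non-ignored positions bounded by $m$ rather than $d\cdot m$. An expander-like graph as you suggest would also work, but you need the underlying hypergraph to be $3$-uniform, $2$-regular in elements (so positions are bipartitioned into positive/negative, enabling the cancellation in Proposition~\ref{prop:unsolvable}), and to have the isoperimetric property of Fact~\ref{lem:small-triangulation}; the torus triangulation is simply a concrete choice with all three.
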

In other words, no local equivariant algorithm can solve CSP($T$), when $T$ a coset template but not 2-Helly.
As a direct corollary, bounded width implies 2-Helly for coset templates.
Note that the converse of Theorem~\ref{thm:bw2sw2} holds as well, as 2-Helly implies bounded width.

Another direct consequence of Theorem~\ref{thm:bw2sw2} is that
coset templates $T$ for which CSP($T$) is definable in LFP, IFP or IFP+C, are 2-Helly.
In consequence, over coset templates, all the mentioned fix-point extensions of first-order logic are equally expressive, and equivalent to bounded width.

The rest of this section is devoted to the proof of Theorem~\ref{thm:bw2sw2}:
assuming a coset template $T$ is not 2-Helly, we construct a family of instances that
are hard for every local equivariant algorithm.
Interestingly, the hard instances are a generalization of CFI graphs~\cite{CFI92}.
The idea of the proof generalizes the construction of~\cite{BKLT13}.

%Theorem~\ref{thm:bw2sw2} is not a new result: it follows from
%Lemma~9 in~\cite{DL08} and from the results of~\cite{LZ07}.
%On the other hand, the proof of Theorem~\ref{thm:bw2sw2} is not at all specific for the consistency algorithm, 
%and can be straightforwardly adapted to any other algorithm satisfying the following two properties.

\begin{proof}
Fix a coset template $T$ being not 2-Helly, and a local equivariant algorithm. We aim at showing that the algorithm does not correctly solve CSP($T$). 
Let $\data_i(I)$ denote the object computed in the $i$th stage of the algorithm on input $I$. 
% suppose it is not 2-Helly. We aim at showing that $T$ has no bounded width.
%For the ease of reading we state some claims without proofs. The proofs are provided afterwords.

We start with the following claim, whose proof is postponed to Section~\ref{sec:almost direct product}:

\begin{proposition} \label{prop:almost direct product}
There are some subgroups $S_1, S_2, S_3$ of carrier groups, and an almost-direct product
\[
R \leq S_1 \times S_2 \times S_3
\]
such that $R$ and all its cosets in $S_1 \times S_2 \times S_3$ are pp-definable (as ternary relations).
\end{proposition}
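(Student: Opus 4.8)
The plan is to distil the failure of 2-Helly down to a single $(2,3)$-anomaly inside a subgroup instance, and then to read off the almost-direct product from the relation of triples that extend to a solution. Since $T$ is not 2-Helly, Lemma~\ref{lem:iffanomaly} gives an instance with a $(k,k+1)$-anomaly $h$ for some $k\ge 2$. Picking a $k$-element restriction of $h$ together with a solution $\bar h$ extending it, and passing to $I\cdot\bar h^{-1}$ and $h\cdot\bar h^{-1}$ (using Fact~\ref{fact:action solution} in both directions to transport solvability and extendability), I may assume the instance is a subgroup instance: $\mathbf 1=\bar h\cdot\bar h^{-1}$ is now a solution, so every constraining relation is a coset through the corresponding identity tuple, hence a subgroup. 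Applying Lemma~\ref{lem:anomaly} $k-2$ times then yields a subgroup instance $J$ with a $(2,3)$-anomaly $h$ on a three-element set $\{a_1,a_2,a_3\}$; let $G_i$ be the constraining group of $a_i$. Translating $J$ by a solution (which keeps $J$ a subgroup instance since $\mathbf 1$ stays a solution, and leaves the set of extendable triples unchanged) I finally normalise $h$ to $h=(1,1,\sigma)$ with $\sigma\ne 1$, using that $h$ restricted to $\{a_1,a_2\}$ extends to a solution.

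Next, let $R\subseteq G_1\times G_2\times G_3$ be the set of partial solutions on $\{a_1,a_2,a_3\}$ extending to a solution of $J$. By Fact~\ref{fact:solutions coset}(2), $\mathbf 1$ is a solution, so $R$ is nonempty; hence $R$ is pp-definable (Fact~\ref{fact:solutions pp-definable}) and a coset (Fact~\ref{fact:action ppdef}(1)), and since $(1,1,1)\in R$ it is in fact a subgroup, with all cosets pp-definable (Fact~\ref{fact:action ppdef}(2)). Putting $S_i=\pi_i(R)$ — a pp-definable subgroup of the carrier group $G_i$ — we get $R\le S_1\times S_2\times S_3$, and condition~\eqref{eq:stronganomaly1} is immediate because $(1,1,\sigma)$ lies in $S_1\times S_2\times S_3$ but, being an anomaly, not in $R$. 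Everything therefore reduces to conditions~\eqref{eq:stronganomaly2}--\eqref{eq:stronganomaly4}: that each binary projection $\pi_{ij}(R)$ equals the \emph{full} product $S_i\times S_j$.

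This is the main obstacle, and it fails for a general $J$; it has to be forced by making $J$ as tightly constrained as possible. If, say, $\pi_{12}(R)\subsetneq S_1\times S_2$, then $N_1=\{g\in S_1:(g,1)\in\pi_{12}(R)\}$ is a pp-definable subgroup properly contained in $S_1$, and I would add the constraint $N_1(a_1)$ to $J$. The result is again a subgroup instance, with strictly fewer solutions, in which $h=(1,1,\sigma)$ is still a $(2,3)$-anomaly — the only delicate point being that the restriction $(1,\sigma)$ of $h$ to $\{a_2,a_3\}$ still extends, which holds because any triple $(c_1,1,\sigma)\in R$ has $(c_1,1)\in\pi_{12}(R)$ and hence $c_1\in N_1$, so a solution extending $(1,\sigma)$ on $\{a_2,a_3\}$ automatically maps $a_1$ into $N_1$. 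Iterating this move across the three coordinate pairs strictly decreases the number of solutions and so terminates, at a subgroup instance whose relation $R$ has all binary projections full; that $R$ is then an almost-direct product of $S_1,S_2,S_3$, pp-definable together with its cosets, as required. Aside from this tightening step — and the single observation $c_1\in N_1$ that makes it go through — the argument is bookkeeping with Lemmas~\ref{lem:iffanomaly} and~\ref{lem:anomaly} and Facts~\ref{fact:solutions pp-definable}, \ref{fact:solutions coset}, \ref{fact:action ppdef}, \ref{fact:action solution}.
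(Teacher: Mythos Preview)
Your strategy is sound and reaches the conclusion, but it diverges from the paper's proof in the final step, and there is one small wrinkle in your iteration that deserves a word.

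\paragraph{Comparison with the paper.}
Up to and including the reduction to a $(2,3)$-anomaly in a subgroup instance, you and the paper proceed identically. At that point the paper does \emph{not} iterate. It writes down $S_1,S_2,S_3$ in one shot as explicit pp-definable subgroups of the carrier groups,
\[
S_1=\{\tau_1:\exists\tau\,H(\tau_1,\tau,1)\}\cap\{\tau_1:\exists\tau\,H(\tau_1,1,\tau)\},
\]
and symmetrically for $S_2,S_3$, where $H$ is your $R$ before any tightening; then it sets $R=H\cap(S_1\times S_2\times S_3)$ and checks conditions~\eqref{eq:stronganomaly1}--\eqref{eq:stronganomaly4} directly from \eqref{eq:weakanomaly1}--\eqref{eq:weakanomaly4}. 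Your iterative tightening (shrinking $J$ by adding unary subgroup constraints until every binary projection of $R$ is full) is a different, more procedural route; it is arguably more transparent about \emph{why} such $S_i$ exist, while the paper's construction is shorter and avoids any termination argument.

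\paragraph{The wrinkle.}
Your verification that $h=(1,1,\sigma)$ remains a $(2,3)$-anomaly after adding $N_1(a_1)$ is tailored to the pair $(1,2)$: the ``only delicate point'' uses that an extension $(c_1,1,\sigma)$ of $\restr{h}{\{a_2,a_3\}}$ automatically has $(c_1,1)\in\pi_{12}(R)$. If you perform the literally analogous move for the pair $(1,3)$---set $N_1'=\{g:(g,1)\in\pi_{13}(R)\}$ and add $N_1'(a_1)$---the same delicate restriction $\restr{h}{\{a_2,a_3\}}$ now requires $(c_1,1)\in\pi_{13}(R)$, which does \emph{not} follow from $(c_1,1,\sigma)\in R$. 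So ``iterating this move across the three coordinate pairs'' is not quite symmetric. The fix is cheap: before tightening with respect to the pair $(i,j)$, re-normalise $h$ so that $h(a_i)=h(a_j)=1$ (translation by a solution leaves a subgroup instance literally unchanged, as you already observed), and then your $(1,2)$ argument applies verbatim with coordinates relabelled. Alternatively, for the pairs $(1,3)$ and $(2,3)$ constrain $a_3$ rather than $a_1$ or $a_2$; then the delicate checks all reduce to instances of the anomaly hypothesis. Either way the solution count still strictly drops, and termination is unaffected.

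With that adjustment your proof is complete and correct.
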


%
%such that $S_1, S_2, S_3$ and $R$ are all pp-definable (as unary or, respectively, ternary relations in $T$) 
%and witness the following properties:
%\begin{align} 
%\label{eq:stronganomaly1}
%%\hspace{-0.5cm}
%%\exists \pi_1 \in S_1, \ \pi_2 \in S_2, \ \pi_3 \in S_3 . \ \neg R(\pi_1, \pi_2, \pi_3) \\
%R \neq S_1 \times S_2 \times S_3  \\
%\label{eq:stronganomaly2}
%\forall \pi_2 \in S_2, \ \pi_3 \in S_3, \ \exists \pi_1 \in S_1 . \ R(\pi_1, \pi_2, \pi_3) \\
%\label{eq:stronganomaly3}
%\forall \pi_1 \in S_1, \ \pi_3 \in S_3, \ \exists \pi_2 \in S_2 . \ R(\pi_1, \pi_2, \pi_3) \\
%\label{eq:stronganomaly4}
%\forall \pi_1 \in S_1, \ \pi_2 \in S_2, \ \exists \pi_3 \in S_3 . \ R(\pi_1, \pi_2, \pi_3) 
%\end{align}

Following the idea of~\cite{BKLT13},
we will now define  a class of  instances, called $n$-torus instances, and then show that the consistency algorithm yields incorrect results for these instances.
An $n$-torus instance is an instance of particular shape. It contains exactly $3 n^2$ elements
\[
a_{i j}, \ b_{i j}, \ c_{i j},
\quad \text{ for } i, j \in \{0 \ldots n-1\},
\]
and exactly $2 n^2$ constraints:
\begin{align} \label{eq:constraints}
R_{i j}(a_{i j}, b_{i j}, c_{i j}) \ \text{ and } \  R'_{i j}(a_{i (j{+}1)}, b_{(i{+}1) j}, c_{i j}),  
\end{align}
for $i, j \in \{0 \ldots n-1\}$.
We adopt the convention that all indices are counted modulo $n$, e.g.~$a_{i n} = a_{i 0}$ and $a_{n i} = a_{0 i}$.
Relations $R_{i j}$ and $R'_{i j}$ are arbitrary cosets of $R$ in $S_1 \times S_2 \times S_3$;
by Proposition~\ref{prop:almost direct product} they are all pp-definable.
% Fact~\ref{fact:pp-definable equivalent}.
% \slcomm{ze wolno uzywac pp-definiowalnych}
Formally, the constraints in $n$-torus instance are not just relations from $T$, but rather pp-definable relations in $T$. 
We rely here on a folklore common knowledge: a pp-definable constraint can be simulated by adjoining to an instance a gadget, whose size is the number of (existentially) quantified variables in the defining pp-formula.

The $2 n^2$ tuples 
$
(a_{i j}, b_{i j}, c_{i j}) \ \text{ and } \ (a_{i (j{+}1)}, b_{(i{+}1) j}, c_{i j})
$
appearing in constraints~\eqref{eq:constraints} we call \emph{positions} of an $n$-torus.
The shape of
a $3$-torus instance is depicted below, with triangles representing positions and
sides of triangles representing elements.
\[
\xymatrix@-5pt{
&&& \cdot \ar@{.}[rr]|-{a_{0 0}} && \cdot \ar@{.}[rr]|-{a_{1 0}} && \cdot \ar@{.}[rr]|-{a_{2 0}} && \cdot \\
&& \cdot \ar@{.}[rr]|-{a_{0 2}}  \ar@{.}[ru]|-{b_{0 2}} && 
\cdot \ar@{.}[rr]|-{a_{1 2}} \ar@{.}[ru]|-{b_{1 2}}  \ar@{.}[lu]|-{c_{0 2}}  && 
\cdot \ar@{.}[rr]|-{a_{2 2}} \ar@{.}[ru]|-{b_{2 2}}  \ar@{.}[lu]|-{c_{1 2}} && 
\cdot \ar@{.}[ru]|-{b_{0 2}}  \ar@{.}[lu]|-{c_{2 2}}  &  \\
& \cdot \ar@{.}[rr]|-{a_{0 1}}  \ar@{.}[ru]|-{b_{0 1}} && 
\cdot \ar@{.}[rr]|-{a_{1 1}}   \ar@{.}[ru]|-{b_{1 1}}  \ar@{.}[lu]|-{c_{0 1}} && 
\cdot \ar@{.}[rr]|-{a_{2 1}}   \ar@{.}[ru]|-{b_{2 1}}  \ar@{.}[lu]|-{c_{1 1}} && 
\cdot  \ar@{.}[ru]|-{b_{0 1}}  \ar@{.}[lu]|-{c_{2 1}}  &&    \\
\cdot \ar@{.}[rr]|-{a_{0 0}}   \ar@{.}[ru]|-{b_{0 0}} && 
\cdot  \ar@{.}[rr]|-{a_{1 0}}  \ar@{.}[ru]|-{b_{1 0}}  \ar@{.}[lu]|-{c_{0 0}}  && 
\cdot  \ar@{.}[rr]|-{a_{2 0}}   \ar@{.}[ru]|-{b_{2 0}}  \ar@{.}[lu]|-{c_{1 0}}  && 
\cdot   \ar@{.}[ru]|-{b_{0 0}}  \ar@{.}[lu]|-{c_{2 0}}   && 
}
\]

%\[
%\xymatrix@-2pc{
%&&& a_{01} \ar@/^/@{-}[rdd] &&&& a_{11} &&&& a_{21} \\
%\\
% b_{00}  \ar@/^1pc/@{.}[rr] & & c_{00}  \ar@/^/@{.}[ldd] \ar@/^/@{-}[ruu] && b_{10} \ar@/^1pc/@{-}[ll] && c_{10} && b_{20} && c_{20} \\
% \\
%& a_{00} \ar@/^/@{.}[luu] &&&& a_{10} &&&& a_{20} &&&&
%}
%\]

%\begin{figure}[htbp]
%	\centering
%		\includegraphics[scale=0.30]{torus}
%	\caption{An $n$-torus.  }
%	\label{fig:torus}
%\end{figure}

Every element $a$ of an $n$-torus appears in exactly two constraints.
Thus every position $(a, b, c)$ has exactly three \emph{neighbors}, 
namely those other positions that contain any of $a, b, c$. 
For instance, neighbors of the position $(a_{12}, b_{21}, c_{11})$ are
\[
(a_{12}, b_{12}, c_{12}), \quad
(a_{21}, b_{21}, c_{21}), \quad \text{and }\ 
(a_{11}, b_{11}, c_{11}).
\]
This defines the 3-regular neighborhood graph, with vertices being the positions of an $n$-torus.

The $n$-torus instances are built from triangulations of a torus surface.
It is however not particularly important to use a torus; equally well a sphere could be used instead, 
or any other connected closed surface, as long as, intuitively speaking,
% the following easy observation is satisfied.Intuitively, the observation says that an $n$-torus
the surface is hard to decompose into small pieces.
The non-decomposability can be formally stated as follows:
\begin{fact}\label{lem:small-triangulation}
After removing  $j < n$ positions, the neighborhood graph of an $n$-torus still contains a connected component of size 
at least $2n^2-j^2$.
\end{fact}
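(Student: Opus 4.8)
The plan is to read off the torus geometry of the instance from its neighbourhood graph, which is the dual graph of a triangulation of the torus surface into $2n^2$ triangles (the positions). The first step is to set up grid coordinates. Group the positions into $n^2$ \emph{cells}, where the cell $(i,j)$ consists of the two positions $(a_{ij},b_{ij},c_{ij})$ and $(a_{i(j{+}1)},b_{(i{+}1)j},c_{ij})$; these share the element $c_{ij}$ and hence are neighbours. By recording, for each element $a_{ij}$, $b_{ij}$, $c_{ij}$, the (exactly two) positions in which it occurs, one checks that two distinct cells are joined by an edge precisely when they are horizontally or vertically adjacent in $\Z_n\times\Z_n$; thus the ``cell graph'' is the $n\times n$ toroidal grid, and each of the $n$ \emph{rows} $\{(a_{ij},b_{ij},c_{ij}),(a_{i(j{+}1)},b_{(i{+}1)j},c_{ij}):j\}$ and $n$ \emph{columns} spans a cycle on $2n$ positions, with every position lying on exactly one row and one column.

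The core of the argument is that deleting $j<n$ positions cannot cut the torus ``the long way round''. Call a row or column \emph{intact} if none of its positions was deleted; since only $j<n$ positions are gone, at least $n-j$ rows and at least $n-j$ columns are intact. Any intact row and any intact column meet at a present cell, so all intact rows and columns lie in one component $\mathcal C$. Consequently any surviving position outside $\mathcal C$ sits in a cell whose row \emph{and} column are both damaged, which already confines the set of missing positions to at most $2j^2$ of them. To sharpen this to $j^2$ one invokes the topology of the torus: if the deletion disconnects a nonempty region $S$ from $\mathcal C$, then $S$ is surrounded by a null-homotopic closed curve made of deleted positions (a non-contractible simple closed curve would need at least $n$ triangles and in any case would not separate). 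Such a curve has length at most $j<n$, and a discrete isoperimetric estimate for the flat triangulation bounds the number of triangles it encloses by roughly $\tfrac18 j^2$, an $\tfrac{j}{4}\times\tfrac{j}{4}$ block of cells being extremal. Adding back the $j$ deleted positions, the number of positions outside $\mathcal C$ is at most $j^2$, so $|\mathcal C|\ge 2n^2-j^2$.

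The step I expect to be the real obstacle is the last one, namely converting ``the damage cannot wrap around'' into the explicit constant: one needs a clean discrete isoperimetric inequality on this specific triangulation, the topological input that a separating simple closed curve on the torus is contractible, and a check of the small cases $j\le 2$ (where $3$-connectedness of the neighbourhood graph means nothing is disconnected at all). The bookkeeping around cells that are only partially deleted, and the passage from two to arbitrarily many remaining components, are routine. If a bound of the weaker form $2n^2-O(j^2)$ turns out to be enough for the downstream application, the topological refinement can be dropped entirely and the two-line ``intact rows and columns'' observation suffices on its own, yielding $|\mathcal C|\ge 2n^2-2j^2$.
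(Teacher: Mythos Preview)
Your approach is considerably more detailed than the paper's own justification, which consists of a single sentence: ``locally, the neighborhood graph of an $n$-torus can be seen as a 3-regular graph on the plane. Thus, in order to isolate $j^2$ positions one needs to cut more than $j$ edges.'' That is, the paper simply appeals to the same isoperimetric heuristic you sketch in your final paragraph, without working out any of the details.

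Your row/column argument via the cell decomposition is a genuinely different and cleaner route. The verification that the cell graph is the toroidal $\Z_n\times\Z_n$ grid is correct (cell $(i,j)$ is adjacent exactly to $(i\pm1,j)$ and $(i,j\pm1)$ via the $b$- and $a$-edges respectively, and the two positions inside a cell are joined by the $c$-edge), and the counting argument yielding $|\mathcal C|\ge 2n^2-2j^2$ is sound. As you rightly observe, this weaker bound is all that the downstream proof of Proposition~\ref{prop:does not observe} actually needs: one simply takes $m=2(2d)^2$ instead of $(2d)^2$ and adjusts the threshold on $n$ accordingly. So your two-line argument already delivers a complete proof of what the paper uses, whereas the paper itself leaves the isoperimetric step at the level of intuition.

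For the sharp constant $j^2$, your sketch is on the same footing as the paper's: the topological input (a separating simple closed curve on the torus is null-homotopic) is correct, but neither you nor the paper pins down the discrete isoperimetric constant, and your extremal ``$\tfrac{j}{4}\times\tfrac{j}{4}$ block'' and $\tfrac18 j^2$ figures would need a careful check. Given that the application is insensitive to the constant, I would recommend you simply state and prove the $2j^2$ version and note that it suffices.
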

Indeed, locally, the neighborhood graph of an $n$-torus can be seen as a 3-regular graph on the plane.
Thus, in order to isolate $j^2$ positions one needs to cut more than $j$ edges.

%Consider the $(k, l)$-consistency algorithm, for fixed $k \leq l$.

\begin{definition}
Let $\torus$ be an $n$-torus, and $i \geq 0$.
We say that the % $(k, l)$-consistency 
algorithm \emph{ignores}  a position $(a, b, c)$ of $\torus$ % at $X$ 
after the $i$th stage, if
\[
\data_i(\torus) \ = \ \data_i(\torus')
%{\cal H}_X({\torus}, i) \ = \ {\cal H}_X({\torus'}, i),
\]
for every $n$-torus $\torus'$ that differs from $\torus$ only by one constraint at position $(a, b, c)$.
\end{definition}
Using Fact~\ref{lem:small-triangulation} we prove:
\begin{proposition} \label{prop:does not observe}
There is some $m \in \Nat$
% depending only on $l$ 
such that for sufficiently large $n$ and every $n$-torus $\torus$, 
the 
% $(k,l)$-consistency 
algorithm ignores, 
% at every $X$ and 
after every stage, all but at most $m$ positions of $\torus$.
\end{proposition}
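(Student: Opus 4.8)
The plan is to strengthen the statement and prove it by induction on the stage number $i$. Concretely, I will show that there is a constant $m$, depending only on the locality bound $d$ of the algorithm, such that for all sufficiently large $n$, every $n$-torus $\torus$ and every $i$, the object $\data_i(\torus)$ is \emph{determined by the restriction of $\torus$ to some set $\Lambda_i$ of at most $m$ positions}: that is, $\data_i(\torus') = \data_i(\torus)$ for every $n$-torus $\torus'$ agreeing with $\torus$ on all constraints at positions of $\Lambda_i$. Proposition~\ref{prop:does not observe} is then immediate, since changing the single constraint at a position outside $\Lambda_i$ leaves $\data_i(\torus)$ unchanged, so all but at most $m$ positions are ignored. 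The base case, and every stage at which (L0) holds, is trivial: there $\data_i(\torus)$ depends only on the restriction of $\torus$ to a set of at most $d$ elements, so one takes $\Lambda_i$ to be the set of positions whose three elements all lie in that set, of size at most $\binom{d}{3}$.

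For a stage $i$ at which (L1) and (L2) hold: by (L1), $\data_i(\torus)$ is a function of at most $d$ of the earlier objects $\data_{j_1}(\torus), \dots, \data_{j_k}(\torus)$, each determined by $\restr{\torus}{\Lambda_{j_l}}$ with $|\Lambda_{j_l}| \le m$ by the inductive hypothesis, so $\data_i(\torus)$ is determined by $\restr{\torus}{\Lambda}$ with $\Lambda = \bigcup_l \Lambda_{j_l}$ of size at most $dm$. This set is too big, and the point is to cut it down using (L2). Let $\supp$ be the bounded support of $\data_i(\torus)$ provided by (L2), $|\supp| \le d$, so at most $2d$ positions \emph{touch} $\supp$. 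Taking $n$ large enough that $2d < n$ and $dm < 2n^2 - (2d)^2$, Fact~\ref{lem:small-triangulation} yields a connected component $C$ of the neighbourhood graph with those $\le 2d$ touching positions removed that contains all but at most $(2d)^2$ of the positions; in particular $C \not\subseteq \Lambda$. I claim one may take
\[
\Lambda_i \ = \ \{\text{positions touching }\supp\} \ \cup \ (\Lambda \setminus C),
\]
whose size is at most $(2d)^2 + 2d$; setting $m := \max\{\binom{d}{3},\ (2d)^2+2d\}$ then closes the induction, provided the claim holds.

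To prove the claim it suffices, changing one constraint at a time (and using at each step that $\data_i$ of the intermediate torus still equals $\data_i(\torus)$, hence still has support $\supp$), to show that replacing the constraint at a single position $q \in \Lambda \setminus \Lambda_i$ by an arbitrary coset $R'$ of $R$ — call the result $\torus'$ — does not change $\data_i(\torus)$; every such $q$ lies in $C$ and does not touch $\supp$. The heart of the argument is a CFI-style routing of the twist introduced at $q$: since $q \in C$, $C$ is connected and $C \setminus \Lambda \ne \emptyset$, pick a simple path $q = q_0, q_1, \dots, q_L$ inside $C$ with $q_L \notin \Lambda$, and define a pre-solution $s$ of $\torus$ that is $1$ off the path's edges and whose values on the path's edges are chosen greedily from $q_0$ onward so that the constraint of $\torus$ is turned into $R'$ at $q_0$ and preserved at $q_1, \dots, q_{L-1}$ — at each step the needed edge value exists exactly by conditions \eqref{eq:stronganomaly2}--\eqref{eq:stronganomaly4} for the almost-direct product $R$ of Proposition~\ref{prop:almost direct product} (applied to the stabiliser of the relevant coset, which is a conjugate of $R$ and hence again an almost-direct product). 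The twist "leaves" $\Lambda$ at $q_L$, which $\data_i$ does not look at, so $\torus \cdot s$ agrees with $\torus'$ on all of $\Lambda$; and since the path stays in $C$, $s$ fixes $\supp$. Now equivariance gives $\data_i(\torus \cdot s) = \data_i(\torus) \cdot s$, (L2) gives $\data_i(\torus) \cdot s = \data_i(\torus)$, and $\torus \cdot s$ agreeing with $\torus'$ on $\Lambda$ gives $\data_i(\torus') = \data_i(\torus \cdot s)$; hence $\data_i(\torus') = \data_i(\torus)$.

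The routing step is where I expect the real work to lie: keeping the path simple, bookkeeping the third (off-path) edge at each visited position so that no other constraint is disturbed, and verifying that precisely the constraints at $q_0$ and (the discarded) $q_L$ change. One also has to be careful with the order of quantifiers — the constant $m$ must be fixed as a function of $d$ first, and only then $n$ chosen large — so that the use of Fact~\ref{lem:small-triangulation} with the $\le dm$ positions of $\Lambda$ introduces no circularity between $m$, $n$ and the supports returned by the algorithm.
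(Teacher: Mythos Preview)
Your plan is built on the same three ingredients as the paper's proof: the large connected component $C$ obtained via Fact~\ref{lem:small-triangulation} after deleting the at most $2d$ positions touching the (L2)-support $\supp$; the almost-direct-product property (and you are right that it passes to the conjugate $\sigma^{-1}R\sigma$, which is what one actually needs to realise a prescribed one-coordinate twist at a position carrying the coset $R\sigma$); and the combination of equivariance with (L2) to move a twist inside $C$ without the algorithm noticing. So the heart of the argument is the same.

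The paper packages this more economically. It keeps the \emph{weak} inductive statement---exactly the Proposition's conclusion, insensitivity to a \emph{single} constraint change at all but $\le m$ positions---and proves a one-step Claim: if a position in $C$ is ignored after stage~$i$, so is each of its neighbours in $C$. The Claim is established with a pre-solution $s$ supported on a single shared element~$a$, via
\[
\data_i(\torus)=\data_i(\torus)\cdot s=\data_i(\torus\cdot s)=\data_i(\overline\torus\cdot s)=\data_i(\overline\torus'),
\]
using only (L2) at $\torus$, equivariance, and ignoring of the one neighbouring position. Iterating through $C$ is then connectivity alone. Your path-routing pre-solution is precisely the composite of these one-step moves, so the strengthened hypothesis (``$\data_i$ determined by a set $\Lambda_i$''), the one-constraint-at-a-time reduction, and the simple-path bookkeeping are all dispensable.

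There is also a small wrinkle in your final chain. The step ``$\torus\cdot s$ agreeing with $\torus'$ on $\Lambda$ gives $\data_i(\torus')=\data_i(\torus\cdot s)$'' needs $\data_i$ to be determined by $\Lambda$ \emph{uniformly}; but since $\Lambda=\bigcup_l\Lambda_{j_l}$ may depend on $\torus$ (your inductive hypothesis is per-torus, and the supports in (L0)/(L2) are instance-dependent), and both $\torus\cdot s$ and $\torus'$ disagree with $\torus$ at $q\in\Lambda$, neither lies in the $\Lambda$-class of $\torus$. The fix is immediate---argue with $\torus'\cdot s^{-1}$, which differs from $\torus$ only at $q_L\notin\Lambda$, hence $\data_i(\torus'\cdot s^{-1})=\data_i(\torus)$, and then $\data_i(\torus')=\data_i(\torus'\cdot s^{-1})\cdot s=\data_i(\torus)\cdot s=\data_i(\torus)$---but this rewrite is exactly the paper's one-step argument read along your path.
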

Therefore for every sufficiently large instance, 
the 
% $(k, l)$-consistency 
algorithm necessarily ignores some position after the last stage, which 
easily entails incorrectness of the algorithm. Indeed, consider an $n$-torus $\torus_R$
that uses the relation $R$ of Proposition~\ref{prop:almost direct product}
in all its constraints. Being a subgroup instance, $\torus_R$ is solvable 
%(for instance, use the same arbitrarily chosen triple $(a, b, c)\in R$ to satisfy all constraints) 
and hence the % $(k,l)$-consistency 
algorithm answers positively. 
On one hand, by Proposition~\ref{prop:does not observe}
there is some position $(a_0, b_0, c_0)$ such that the 
% $(k,l)$-consistency 
algorithm answers positively
for the instance obtained by replacing the relation $R$ in the constraint $R(a_0, b_0, c_0)$ in ${\torus}_R$
with any other coset of $R$ in $S_1 \times S_2 \times S_3$.
On the other hand, we prove:
\begin{proposition} \label{prop:unsolvable}
Replacing the relation $R$ with any other coset of $R$
in one constraint % $R(a_0, b_0, c_0)$ 
in  ${\torus}_R$
% in $S_1 \times S_2 \times S_3$ 
yields an unsolvable instance.
\end{proposition}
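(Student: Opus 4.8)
The plan is to exploit the fact that the almost-direct product $R \leq S_1 \times S_2 \times S_3$ is \emph{not} the full product (condition~\eqref{eq:stronganomaly1}), and to use the cyclic (``torus'') structure of the constraints to propagate a defect around a loop. Concretely, suppose we replace $R$ by a coset $R\pi$ at a single position, say at the constraint $R_{00}(a_{00}, b_{00}, c_{00})$, obtaining an instance $\torus'$, and suppose toward a contradiction that $\torus'$ has a solution $h$. Since $\torus_R$ is a subgroup instance with solution set a subgroup $\cal H$ of the group of pre-solutions (Fact~\ref{fact:solutions coset}(2)), and $\torus'$ differs from $\torus_R$ only by translating one constraint by $\pi$, the existence of a solution of $\torus'$ is equivalent, after the substitution $h \mapsto h \cdot s$ for a suitable pre-solution $s$ supported on $\{a_{00}, b_{00}, c_{00}\}$ (Fact~\ref{fact:action solution}), to the existence of a pre-solution of $\torus_R$ that satisfies every constraint $R_{ij}, R'_{ij}$ except $R_{00}$, which it instead lands in $R\pi$. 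In other words, it suffices to show: no pre-solution of the $n$-torus simultaneously lies in $R$ on all positions but one and in a nontrivial coset $R\pi$ on that remaining position.

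The key step is a \emph{closing-up} argument along the torus. I would think of a pre-solution $h$ as an assignment of group elements to the edges $a_{ij}, b_{ij}, c_{ij}$, and of the condition ``$h$ lies in $R$ at position $p$'' as a local linear constraint. Using the strictness available after passing to the quotient $[H]$ (Lemma~\ref{lem:strictquotient}, Lemma~\ref{lem:abelian}): in a \emph{strict} almost-direct product any two coordinates freely and uniquely determine the third, so $R$ behaves like the graph of a function, and the commutativity from Lemma~\ref{lem:abelian} lets me treat everything additively. Then one traverses a non-contractible cycle of the torus (a ``meridian'' through the row $j=0$, or a ``longitude'' through the column $i=0$): each triangle constraint $R_{ij}$ together with $R'_{ij}$ forces, in the quotient, a fixed relation between the values entering and leaving the triangle, so going once around the torus composes these relations and must return to the identity. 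If one position carries $R\pi$ instead of $R$ with $\pi \notin R$-direction, the monodromy around the loop picks up the corresponding nontrivial class, forcing $\bar\pi = 1$ in the relevant quotient $[S_i]$ — but since the coset $R\pi \neq R$ we chose it with $\pi$ outside $N_i$, giving $\bar\pi \neq 1$, a contradiction. Because one can choose the loop to pass through whichever single position was modified (the torus is vertex-transitive under the index shifts $i \mapsto i+1$, $j\mapsto j+1$), this handles an arbitrary replaced position and an arbitrary coset $R\pi \neq R$.

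The main obstacle I anticipate is making the ``monodromy around a loop'' step rigorous at the level of the original, non-strict groups $S_1, S_2, S_3$ rather than their strict quotients $[S_i]$. The clean statement lives in the quotient, where strictness gives genuine functional dependence; lifting back, one only learns that the product of the local ambiguities around the loop lies in $N_1 \times N_2 \times N_3$, which is exactly why the modification must be a coset of $R$ itself (so that $\pi$ is only well-defined modulo $N_i$, and ``$R\pi \neq R$'' is the honest non-triviality condition). I would therefore phrase the whole argument after quotienting by $N = N_1\times N_2\times N_3$ from the outset: replace each $S_i$ by $[S_i]$ and $R$ by $[R]$, observe via Fact~\ref{fact:action ppdef} that cosets of $[R]$ are still pp-definable (so the reduced instance is still a legitimate $n$-torus instance over $T$), and note that solvability of $\torus'$ implies solvability of its reduction; then the cycle-closing computation goes through cleanly using strictness and commutativity. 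A secondary point to check carefully is bookkeeping of the index arithmetic modulo $n$ so that the chosen loop indeed closes up and visits the single modified triangle exactly once.
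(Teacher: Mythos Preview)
Your high-level plan---pass to the strict commutative quotient using Lemmas~\ref{lem:strictquotient} and~\ref{lem:abelian} and then exploit the combinatorics of the torus---is exactly right, and it is what the paper does. But two concrete steps in your sketch do not work as written.

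\emph{A single loop does not close.} Pairing $R_{ij}$ with $R'_{ij}$ and eliminating $c_{ij}$ in the abelian quotient gives, for each $(i,j)$, the relation
\[
\coset{h(a_{i(j+1)})}+\coset{h(b_{(i+1)j})}=\coset{h(a_{ij})}+\coset{h(b_{ij})}.
\]
Summing this over one meridian (say all $j$ for fixed $i=0$) only yields $\sum_j\coset{h(b_{1j})}=\sum_j\coset{h(b_{0j})}$, i.e.\ a relation between the two boundary columns of the strip; the single modified triangle contributes $\bar\pi$ to this relation but does \emph{not} force $\bar\pi=0$. To close up you must feed in the remaining columns as well, and then you are summing over \emph{all} $2n^2$ triangles---which is the paper's argument. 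The paper organizes it cleanly: the positions split into ``positive'' and ``negative'' so that every element sits in exactly one of each; one forms the alternating product~\eqref{eq:expre} over all positions except the distinguished one, observes it lies in $R$, and after applying $\coset{\_}$ everything cancels except $(\coset{h(a_0)},\coset{h(b_0)},\coset{h(c_0)})$. The non-contractibility of a single meridian is irrelevant here; what matters is that the surface is closed (the paper remarks that a sphere would work just as well).

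\emph{You cannot quotient at the level of the instance.} Your proposal to ``replace each $S_i$ by $\coset{S_i}$ and $R$ by $\coset{R}$'' and call the result an $n$-torus instance over $T$ is not justified: $\coset{S_i}$ need not be a subgroup of any carrier group of $T$, and $\coset{R}$ is a quotient, not a coset, so Fact~\ref{fact:action ppdef} does not apply. The paper avoids this by computing the product~\eqref{eq:expre} inside $R$ first (in unspecified order---this is fine because every factor is already in $R$), then applying the homomorphism $\coset{\_}$ to land in the commutative group $\coset{R}$, and finally lifting the conclusion back to $R$ via the closure property~\eqref{eq:equiv}. If you want to ``work in the quotient from the outset'', do it as an abstract group-theoretic computation on the values $h(a)$, not as a new CSP instance.
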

In consequence, the 
% $(k, l)$-consistency 
algorithm is incorrect.
This completes the proof of Theorem~\ref{thm:bw2sw2}, once
%
%\begin{lemma} \label{lem:bw}
%If $T$ has a template anomaly then $T$ has no bounded width.
%\end{lemma}
%
%The lemmas imply Lemma~\ref{lem:bw2helly}.
%Indeed, due to Lemma~\ref{lem:anomaly} some instance of $T$ contains necessarily an anomaly of size $3$, hence by Lemma~\ref{lem:stronganomaly}
%$T$ has a template anomaly. Then by Lemma~\ref{lem:bw} we deduce that $T$ has no bounded width, as required.
%
%\end{proof}
we prove the three yet unproved claims, namely Propositions~\ref{prop:almost direct product}, \ref{prop:does not observe}
and~\ref{prop:unsolvable}.
\end{proof}

\subsection{Proof of Proposition~\ref{prop:almost direct product}}
\label{sec:almost direct product}

By Lemma~\ref{lem:iffanomaly} some instance $I$ contains a $(k,k+1)$-anomaly, for $k \geq 2$. 
Note that this implies that this instance has at least one solution.

Wlog.~we can assume that $I$ is a subgroup instance.
Indeed, for an arbitrary solution $h$ of $I$, define a new instance by the action of $h^{-1}$:
\[
I' \ := \ \ I \cdot h^{-1}.
\]
As $h$ is a solution of $I$, for every constraint $R(a_1, \ldots, a_n)$ in $I$, the tuple $(h(a_1), \ldots, h(a_n))$ 
is in $R$.
Hence every relation appearing in a constraint of $I'$ is a subgroup in the product of constraining groups, as required.
Due to Fact~\ref{fact:action solution} an anomaly admitted by $I$ translates to an anomaly admitted by $I'$.

By Lemma~\ref{lem:anomaly} we deduce that some (possibly different) instance $I$ admits a $(2, 3)$-anomaly  $h = (\pi_1, \pi_2, \pi_3)$.
Consider the set $H$ of all those partial solutions, with the same domain as $h$, that extend to a solution of $I$.
$H$ is a pp-definable ternary relation according to Fact~\ref{fact:solutions pp-definable}.
By Fact~\ref{fact:solutions coset}(2) we know that $H$
is a subgroup in the product $G_1 \times G_2 \times G_3$ of some three carrier groups.
As $h$ is a $(2,3)$-anomaly, we know
(we prefer below to write $H(\pi_1, \pi_2, \pi_3)$ instead of $(\pi_1, \pi_2, \pi_3) \in H$):
\begin{align} \label{eq:weakanomaly1}
\neg H(\pi_1, \pi_2, \pi_3) \\
\label{eq:weakanomaly2} 
\exists \tau \in G_1 . \ H(\tau, \pi_2, \pi_3) \\
\label{eq:weakanomaly3} 
\exists \tau \in G_2 . \ H(\pi_1, \tau, \pi_3) \\
\label{eq:weakanomaly4}
\exists \tau \in G_3 . \ H(\pi_1, \pi_2, \tau) 
\end{align}

Now we are ready to define an almost-direct product $R \leq S_1 \times S_2 \times S_3$.
The subgroups $S_1 \leq G_1$, $S_2 \leq G_2$ and $S_3 \leq G_3$ we define as follows:
\begin{align*}
\tau_1 \in S_1 \quad \iff \quad \exists \tau. \ H(\tau_1, \tau, 1) \ \land \ \exists \tau. \ H(\tau_1, 1, \tau) \\
\tau_2 \in S_2 \quad \iff \quad \exists \tau. \ H(\tau, \tau_2, 1) \ \land \ \exists \tau. \ H(1, \tau_2, \tau) \\
\tau_3 \in S_3 \quad \iff \quad \exists \tau. \ H(\tau, 1, \tau_3) \ \land \ \exists \tau. \ H(1, \tau, \tau_3) 
\end{align*}
and the subgroup $R$ as the restriction of $H$ to $S_1 \times S_2 \times S_3$:
\begin{align*}
& R \ := \ \ H \ \cap \ S_1 \times S_2 \times S_3.
\end{align*}
By the very definition, $S_1, S_2, S_3$ and $R$ are pp-definable.

We need to show the conditions \eqref{eq:stronganomaly1}--\eqref{eq:stronganomaly4} in Definition~\ref{def:almost direct product}.
For~\eqref{eq:stronganomaly1} (i.e.~$R \neq S_1 \times S_2 \times S_3$) we use~\eqref{eq:weakanomaly1} and~\eqref{eq:weakanomaly2} to conclude that
for $\tau_1 = \tau^{-1} \pi_1 \in G_1$ it holds
\[
\neg H(\tau_1, 1, 1).
\] 
Moreover, using~\eqref{eq:weakanomaly2} and~\eqref{eq:weakanomaly3} we deduce that for some $\bar\tau \in G_2$
\begin{align}  \label{e-rev1}
H(\tau_1, \bar\tau, 1);
\end{align}
and similarly, using~\eqref{eq:weakanomaly2} and~\eqref{eq:weakanomaly4}, we deduce that for some $\bar\tau \in G_3$,
\[
H(\tau_1, 1, \bar\tau).
\]
Thus $\tau_1 \in S_1$ and therefore $(\tau_1, 1, 1) \in S_1 \times S_2 \times S_3 \setminus R$.

Now we concentrate on condition~\eqref{eq:stronganomaly4}  in Definition~\ref{def:almost direct product}
(the remaining two conditions~\eqref{eq:stronganomaly2} and~\eqref{eq:stronganomaly3} 
are shown in the same way).
Let $\tau_1 \in S_1$ and $\tau_2 \in S_2$. By the very definition of $S_1$ and $S_2$ we learn
\begin{align}
\begin{aligned} \label{e-rev2} H(\tau_1, 1, \tau) \end{aligned} \\ %\qquad \text{and} \qquad \\
\begin{aligned} \label{e-rev3} H(1, \tau_2, \tau') \end{aligned}
\end{align}
for some $\tau, \tau'  \in G_3$. % and thus $\tau \tau' \in G_3$. 
Therefore $H(\tau_1, \tau_2, \tau \tau')$ and it only remains to show that
$\tau \tau' \in S_3$. Consider $\tau$ ($\tau'$ is treated analogously) in order to show $\tau \in S_3$. 
The fact~\eqref{e-rev2} proves a half of the defining condition for $\tau \in S_3$, while the other half
is proved by combining~\eqref{e-rev2} with~\eqref{e-rev1} to deduce
$
H(1, \bar\tau^{-1}, \tau).
$
We have thus shown that $R$ is an almost-direct product of $S_1$, $S_2$, $S_3$.
% This completes the proof of $\tau \in S_3$.
%
%
%Finally we concern pp-definability. $H$, being a projection of the set of all solutions of an instance $I$, is a pp-definable ternary relation.
%Then, the very definition of $S_1, S_2, S_3$ and $R$ proves them to be pp-definable as well, as unary and ternary relations respectively, 
%because pp-definable relations are closed
%under existential quantification and conjunction. % (projection and intersection.

Fact~\ref{fact:action ppdef}(2) guarantees that all cosets of $R$ are pp-definable, as required.

\subsection{Proof of Proposition~\ref{prop:does not observe}}

We will need the following property of almost-direct products:
\begin{lemma} \label{lem:1coord}
Every coset $R'$ of $R$ in $S_1 \times S_2 \times S_3$ contains elements of the form
\[
(\tau_1, 1, 1), \quad (1, \tau_2, 1), \quad (1, 1, \tau_3),
\]
for some $\tau_1 \in S_1, \tau_2 \in S_2$ and $\tau_3 \in S_3$.
\end{lemma}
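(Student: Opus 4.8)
The plan is to reduce the statement to the defining property of an almost-direct product, using that any coset $R'$ of $R$ in $S_1 \times S_2 \times S_3$ has the form $R' = R \cdot (\sigma_1, \sigma_2, \sigma_3)$ for some $(\sigma_1, \sigma_2, \sigma_3) \in S_1 \times S_2 \times S_3$. It suffices to produce an element of $R'$ of the form $(\tau_1, 1, 1)$; the other two cases are symmetric. Equivalently, I need an element $(\rho_1, \rho_2, \rho_3) \in R$ with $\rho_2 = \sigma_2^{-1}$ and $\rho_3 = \sigma_3^{-1}$, since then $(\rho_1, \rho_2, \rho_3)\cdot(\sigma_1,\sigma_2,\sigma_3) = (\rho_1\sigma_1, 1, 1) \in R'$.

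The key step is then to invoke condition~\eqref{eq:stronganomaly2} of Definition~\ref{def:almost direct product} for the almost-direct product $R \leq S_1 \times S_2 \times S_3$: for every $\pi_2 \in S_2$ and $\pi_3 \in S_3$ there is some $\pi_1 \in S_1$ with $(\pi_1, \pi_2, \pi_3) \in R$. Applying this with $\pi_2 = \sigma_2^{-1}$, $\pi_3 = \sigma_3^{-1}$ yields the required $(\rho_1, \sigma_2^{-1}, \sigma_3^{-1}) \in R$, and multiplying on the right by $(\sigma_1, \sigma_2, \sigma_3)$ puts an element of the form $(\tau_1, 1, 1)$ into $R'$. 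The elements of the forms $(1, \tau_2, 1)$ and $(1, 1, \tau_3)$ are obtained in exactly the same way, using conditions~\eqref{eq:stronganomaly3} and~\eqref{eq:stronganomaly4} respectively.

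I expect there to be no real obstacle here; the only thing to be careful about is the bookkeeping with the coset representative and the direction of the group action (recall the paper uses right cosets and diagrammatic order), and the fact that $R$ being an almost-direct product of $S_1, S_2, S_3$ has already been established in Proposition~\ref{prop:almost direct product}, so its defining conditions are available. Thus the proof is a short, direct unwinding of the definitions.

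\begin{proof}
Write $R' = R \cdot \sigma$ for some $\sigma = (\sigma_1, \sigma_2, \sigma_3) \in S_1 \times S_2 \times S_3$ (every coset of $R$ has this form). To find an element of $R'$ of the form $(\tau_1, 1, 1)$ it suffices to find $(\rho_1, \rho_2, \rho_3) \in R$ with $\rho_2 = \sigma_2^{-1}$ and $\rho_3 = \sigma_3^{-1}$, for then $(\rho_1, \rho_2, \rho_3)\,\sigma = (\rho_1 \sigma_1, 1, 1) \in R'$. Such an element exists by condition~\eqref{eq:stronganomaly2} of Definition~\ref{def:almost direct product} applied to the almost-direct product $R \leq S_1 \times S_2 \times S_3$, with $\pi_2 := \sigma_2^{-1}$ and $\pi_3 := \sigma_3^{-1}$. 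The elements of the forms $(1, \tau_2, 1)$ and $(1, 1, \tau_3)$ are obtained analogously, using conditions~\eqref{eq:stronganomaly3} and~\eqref{eq:stronganomaly4} instead.
\end{proof}
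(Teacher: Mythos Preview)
Your proof is correct and follows essentially the same approach as the paper: pick a coset representative, invoke condition~\eqref{eq:stronganomaly2} of the almost-direct product to find an element of $R$ with the needed second and third coordinates, and then translate into $R'$. The paper's version is cosmetically slightly different---it takes an arbitrary $\pi \in R'$, finds $\rho = (\rho_1, \pi_2, \pi_3) \in R$, and observes $\rho^{-1}\pi \in R'$---but this is the same argument.
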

\begin{proof}
Indeed, let $ \pi = (\pi_1, \pi_2, \pi_3) \in R'$. Knowing that $\rho = (\rho_1, \pi_2, \pi_3) \in R$ for some $\rho_1 \in S_1$, we get
\[
\rho^{-1} \pi \ = \ (\rho_1^{-1} \pi_1, 1, 1) \in R'
\]
as required. Likewise one proves the remaining two claims.
\end{proof}

Let $m = (2d)^2$, where $d$ is the locality bound of the algorithm, and 
let $\torus$ be an $n$-torus, for $n$ sufficiently large to satisfy $2n^2 > (d+1) \cdot m$.

We proceed by induction on the number $i$ of a stage.
Observe that elements of a set $\supp \subseteq I$ of size at most $d$ appear in at most $2d$ positions of $I$.
Therefore, whenever (L0) applies (this is necessarily the case when $i = 1$),
all but at most $2d < m$ positions are ignored after the $i$th stage.

Otherwise, suppose (L1) and (L2) hold. Let $\supp \subseteq I$ be such that
\begin{align}  \label{eq:zero}
\data_i(I) \cdot s \ = \ \data_i(I) \quad \text{ for all } s \text{ fixing } \supp.
\end{align}
%%%%Now we prove the induction step, $i > 1$. 
%%%recall that $\data_i(I) \cdot s = \data_i(I)$  for all pre-solutions that satisfy $s(x) = 1$ for all $x\in \supp_i$.
%%By the locality of the algorithm there is a number $d$, not depending on $i$, such that the value
%%$\data_i$ depends only on some $d$ among values $\data_1, \ldots, \data_{i-1}$. 
% We only need to consider the unique set ${\cal H}_X({\torus}, i)$ that changes in this stage.
%According to the consistency condition tested by the algorithm, 
%there is a set $Y$ of $l$ elements of ${\torus}$ such that the new value of ${\cal H}_X$ depends only on
%$o(k, l)$ values of ${\cal H}_{X'}$ for $X' \subseteq Y$ (cf.~\eqref{eq:depends on}). 
By (L2) we can assume that the size of $\supp$ is at most $d$.
By the induction assumption for the previous stages and by (L1),
%and all sets $X'$, 
there are at most $d \cdot m$
%$o(k, l) \cdot m$ constraints 
positions not ignored after stage $i$. % at $X$.
We need to show, however, that there are at most $m$ such positions.

The argument has a geometric flavor, and builds on Fact~\ref{lem:small-triangulation}:
%Define $j = 2 k$, the maximal number of constraints referring to elements of $X$. We now reveal the value of $m$; we put $m=j^2$.
after removing from the neighborhood graph all positions in which elements of $\supp$ appear (there is at most $2d$ of them),
there is a connected subset $C$ of positions of size at least $2n^2-m$, so it is larger than $d \cdot m$.
By the induction assumption, some position in $C$ is ignored after the $i$th stage.
For the proof of Proposition~\ref{prop:does not observe}
it is enough to prove that \emph{every} position in $C$ is ignored after $i$th stage.
To this end, since $C$ is connected, it is now enough to show:
\begin{claim}
If some position in $C$ is ignored after the $i$th stage, then every neighbor of that position in $C$ also is.
\end{claim}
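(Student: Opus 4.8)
The plan is to show that if position $p$ in $C$ is ignored after the $i$th stage and $q$ is a neighbor of $p$ inside $C$, then $q$ is ignored too. The key tool is equivariance combined with the fact that $p$ and $q$ share exactly one element, say $e$, of the $n$-torus. Let $\torus$ be the given $n$-torus and $\torus'$ an $n$-torus differing from $\torus$ only in the constraint at position $q$; I want to prove $\data_i(\torus) = \data_i(\torus')$. Since the two constraints at $q$ differ by replacing a coset $R_q$ of $R$ with another coset $R_q'$, and cosets of $R$ are obtained by right multiplication, there is $\pi = (\pi_1,\pi_2,\pi_3)$ with $R_q' = R_q\pi$. The idea is to realize $\torus'$ from $\torus$ by the action of a suitable pre-solution $s$ of $\torus$, i.e.\ $\torus' = \torus\cdot s$, and then to arrange that $s$ fixes the support $\supp$, so that $\data_i(\torus') = \data_i(\torus\cdot s) = \data_i(\torus)\cdot s = \data_i(\torus)$ by equivariance and by \eqref{eq:zero}.

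The main obstacle — and the crucial point where Lemma~\ref{lem:1coord} enters — is that a naive pre-solution $s$ realizing the coset change at $q$ would also alter the constraints at the neighboring positions of $q$ that share the two \emph{other} elements of $q$, potentially propagating the change throughout the graph and in particular touching $\supp$. This is where the almost-direct product structure is exploited: by Lemma~\ref{lem:1coord}, I may choose the coset representative $\pi$ so that the change at $q$ is "concentrated on the coordinate $e$", i.e.\ $\pi$ is of the form $(\tau_1,1,1)$, $(1,\tau_2,1)$ or $(1,1,\tau_3)$ according to which of the three sides of the triangle $q$ the shared element $e$ occupies. Then the pre-solution $s$ that is the identity everywhere except possibly on $e$ (and on a small set of auxiliary gadget elements introduced to simulate the pp-definable constraints, which can be adjusted by Fact~\ref{fact:action ppdef}(2)) realizes the coset change at $q$; because $e$ lies in $p$ as well, this same $s$ also alters the constraint at $p$ — but that is harmless precisely because $p$ is ignored after stage $i$, so $\data_i$ is insensitive to the constraint at $p$. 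More carefully: first apply $s$ to turn the constraint at $q$ into the one of $\torus'$, obtaining an instance that agrees with $\torus'$ except at $p$; then use the hypothesis that $p$ is ignored to replace the constraint at $p$ back, landing on $\torus'$ with the same $\data_i$.

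To make the support argument rigorous I would argue as follows. Since $q \in C$ and $C$ was obtained by deleting all positions containing elements of $\supp$, the element $e$ does not belong to $\supp$, and neither do the auxiliary gadget elements associated to the constraint at $q$ (they are fresh). Hence the pre-solution $s$, being the identity outside $e$ and those gadget elements, fixes $\supp$. Applying equivariance at stage $i$ gives $\data_i(\torus \cdot s) = \data_i(\torus)\cdot s = \data_i(\torus)$, using \eqref{eq:zero}. Finally, $\torus\cdot s$ and $\torus'$ differ only in the constraint at position $p$, and $p$ is ignored after stage $i$, so $\data_i(\torus') = \data_i(\torus\cdot s) = \data_i(\torus)$. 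As $\torus'$ was an arbitrary $n$-torus differing from $\torus$ only at $q$, this shows $q$ is ignored after the $i$th stage, proving the Claim.
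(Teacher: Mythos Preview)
Your plan coincides with the paper's: concentrate the coset difference at $q$ on the shared coordinate $e$ via Lemma~\ref{lem:1coord}, take the pre-solution $s$ supported on $e$ (which fixes $\supp$ since $q\in C$ forces $e\notin\supp$), and combine equivariance with~\eqref{eq:zero} and the hypothesis that $p$ is ignored. There is one logical slip in your final chain, however. You conclude $\data_i(\torus')=\data_i(\torus\cdot s)$ from the fact that $\torus'$ and $\torus\cdot s$ differ only at $p$ and ``$p$ is ignored after stage $i$''. But the ignoring hypothesis is, by definition, a statement about the base instance $\torus$: it asserts $\data_i(\torus)=\data_i(J)$ for every $J$ differing from $\torus$ \emph{only} at $p$. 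Neither $\torus'$ nor $\torus\cdot s$ is such a $J$ --- each differs from $\torus$ at $q$ as well --- so the hypothesis does not apply directly to the pair $(\torus',\torus\cdot s)$.

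The repair, which is exactly how the paper orders the steps, is to invoke the ignoring hypothesis \emph{before} acting by $s$. Let $\overline{\torus}$ be $\torus$ with the constraint at $p$ right-multiplied by $\pi^{-1}$; then $\overline{\torus}$ differs from $\torus$ only at $p$, so the hypothesis gives $\data_i(\torus)=\data_i(\overline{\torus})$. One checks that $\overline{\torus}\cdot s=\torus'$ (acting by $s$ restores the constraint at $p$ and shifts the one at $q$ to the desired coset), and then
\[
\data_i(\torus)\;=\;\data_i(\torus)\cdot s\;=\;\data_i(\overline{\torus})\cdot s\;=\;\data_i(\overline{\torus}\cdot s)\;=\;\data_i(\torus'),
\]
using~\eqref{eq:zero}, the equality just obtained, and equivariance. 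With this reordering your argument is complete and matches the paper. (Your remark about gadget elements is a reasonable caution; the paper simply treats the pp-definable ternary constraints as primitive and sets $s$ to be the identity outside $e$, so no separate adjustment is spelled out.)
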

To show the last claim, 
consider two neighboring constraints in $C$, say $U(a, b, c)$ and $U'(a, b', c')$, both referring to an element $a$.
Supposing that $(a, b, c)$ is ignored, we need to demonstrate that $(a, b', c')$ is ignored too.
Let $\overline \torus'$ be an $n$-torus obtained 
from $\torus$ by replacing the constraint $U'(a, b', c')$ with $\overline U'(a, b', c')$, for some coset
$\overline U' \ = \ U' \pi.$ We need to show 
\begin{align} \label{eq:0}
%{\cal H}_X({\torus}, i) \ = \ {\cal H}_X({\overline \torus'}, i).
\data_i({\torus}) \ = \ \data_i({\overline \torus'}).
\end{align}
Using Lemma~\ref{lem:1coord} we may assume wlog.~that $\pi = (\pi_1, 1, 1)$ for some $\pi_1 \in S_1$.
Let $s$ be a pre-solution defined by
\[
s(x) = \begin{cases} \pi_1 & \text{if } x = a\\
1 & \text{otherwise.}
\end{cases}
\] 
Knowing that $(a, b, c)$ is ignored, we may write
\begin{align} \label{eq:1}
%{\cal H}_X({\torus}, i) \ & = \ {\cal H}_X({\overline \torus}, i),
\data_i({\torus}) \ & = \ \data_i({\overline \torus}),
\end{align}
where the $n$-torus $\overline \torus$ is obtained from $\torus$ by replacing the constraint $U(a, b, c)$ with 
$\overline U(a, b, c)$, for $\overline U = U \pi^{-1}.$
Observe the equality
\begin{align}  \label{eq:s}
{\overline \torus'} \ = \ {\overline \torus} \cdot s.
\end{align}
%As $s(x) = 1$ for all $x \in \supp_i$, by (L2) we deduce
%\begin{align} \label{eq:2}
%\data_i({\overline \torus}) \ = \ \data_i({\overline \torus'}).
%\end{align}
%
Now we are ready to prove~\eqref{eq:0} by composing the following equalities:
\begin{align*} 
%{\cal H}_X({\torus}, i) \ & = \ {\cal H}_X({\torus} \cdot s, i) \\
%{\cal H}_X({\torus}\cdot s, i) \ & = \ {\cal H}_X({\overline \torus} \cdot s, i).
\data_i({\torus}) \ & = \ \data_i({\torus}) \cdot s \  = \ \data_i({\torus} \cdot s) 
 \ = \ \data_i({\overline \torus} \cdot s)  \ = \ \data_i({\overline \torus'}).
\end{align*}
The first equality follows by \eqref{eq:zero}, as $s$ fixes $\supp$;
the second one is the equivariance condition; the third one follows by~\eqref{eq:1} combined with equivariance;
and the last one is a consequence of~\eqref{eq:s}.

\subsection{Proof of Proposition~\ref{prop:unsolvable}}

Fix a position $(a_0, b_0, c_0)$.
Let ${\torus}^{-}_R$ be the instance obtained from ${\torus}_R$ by removing the constraint 
$R(a_0, b_0, c_0)$.
We will show that every solution $h$ of ${\torus}^-_R$ satisfies the constraint $R(a_0, b_0, c_0)$:
%i.e.
\begin{align} \label{eq:want}
% h(a_0, b_0, c_0) % =  
(h(a_0), h(b_0), h(c_0)) 
\in R.
\end{align}

%For conciseness,
%% For a position $(a, b, c)$ and a pre-solution $h$ of ${\torus}$ we write 
%instead of $(h(a), h(b), h(c))$  we write $h(a, b, c)$.  
% \in S_1 \times S_2 \times S_3$.
According to the definition of $n$-torus,   %~\eqref{eq:constraints} 
the positions of $\torus_R$ split into two disjoint subsets, call them 
\emph{negative} and \emph{positive},  so that neighbors of a negative position
are positive, and vice versa.
Wlog.~assume that $(a_0, b_0, c_0)$ is negative.
%:
Consider the following expression (symbol $\prod$ stands for the group operation in $R$,
applied in an unspecified but irrelevant order):
\begin{align} \label{eq:expre}
\prod_{(a, b,c) \text{ negative}} \hspace{-2mm} (h(a), h(b), h(c))^{-1} \  \prod_{(a, b, c) \text{ positive}} \hspace{-2mm} (h(a), h(b), h(c)),
\end{align}
where $(a, b, c)$ in the first subexpression ranges over all negative positions of ${\torus}^-_R$
(hence $(a_0, b_0, c_0)$ is omitted), and in the second subexpression over all positive ones.
The expression~\eqref{eq:expre} evaluates to some value $(\pi_1, \pi_2, \pi_3)$ in $R$.
% ; the value of the expression may however,
% in principle, depend on the order of application of the group operation.
% We aim at showing that irrespectively of the order of application of the group operation,
% the expression~\eqref{eq:expre} evaluates to $h(a_0, b_0, c_0)$, which implies~\eqref{eq:want}.

Let $\coset{\_} : R \to \coset{R}$ be a surjective group homomorphism from $R$ to a 
commutative group $\coset{R}$, 
%strict almost-direct product 
%\[
%\coset{R} \leq \coset{S_1} \times \coset{S_2} \times \coset{S_3},
%\]
guaranteed jointly by Lemmas~\ref{lem:strictquotient} and~\ref{lem:abelian}.
Recall from Section~\ref{sec:almost} that the homomorphism $\coset{\_}$ is defined point-wise, namely 
$\coset{(\tau_1, \tau_2, \tau_3)} = (\coset{\tau_1}, \coset{\tau_2}, \coset{\tau_3})$.
Apply $\coset{\_}$ to~\eqref{eq:expre} to get an expression:
\begin{align} \label{eq:expre f}
%f(\pi) \ = \ 
\prod_{(a, b,c) \text{ negative}} \hspace{-3mm} (\coset{h(a)}, \coset{h(b)}, \coset{h(c)})^{-1}  \ 
\prod_{(a, b, c) \text{ positive}} \hspace{-3mm} (\coset{h(a)}, \coset{h(b)}, \coset{h(c)}).
\end{align}
% which, irrespectively of the order of application of the group operation, evaluates to the same value in $\coset{R}$.
%
Observe that $\coset{h(a)}$ appears in~\eqref{eq:expre f}
exactly once, for every $a \in I$ different from $a_0, b_0, c_0$; the same applies to the inverse $\coset{h(a)}^{-1}$. 
Thus, as $\coset{R}$ is commutative, every $\coset{h(a)}$ together with its inverse cancels out.
Moreover, $\coset{h(a_0)}$, $\coset{h(b_0)}$ and $\coset{h(c_0)}$ also appear in~\eqref{eq:expre f} exactly once,
while their inverses do not appear as the negative position $(a_0, b_0, c_0)$ has been omitted.
Therefore the expression~\eqref{eq:expre f} evaluates to $(\coset{h(a_0)}, \coset{h(b_0)}, \coset{h(c_0)})$.
On the other hand, \eqref{eq:expre f} necessarily evaluates to $(\coset{\pi_1}, \coset{\pi_2}, \coset{\pi_3})$.
% , thus $(\coset{h(a_0)}, \coset{h(b_0)}, \coset{h(c_0)})  = (\coset{\pi_1}, \coset{\pi_2}, \coset{\pi_3})$.
Using the notation of Section~\ref{sec:almost} we can write:
\[
h(a_0) \equiv_1 \pi_1 \quad h(b_0) \equiv_2 \pi_2 \quad h(c_0) \equiv_3 \pi_3. 
\]
Now we use the closure of $R$ on the equivalences, cf.~\eqref{eq:equiv} in Section~\ref{sec:almost}:
as $(\pi_1, \pi_2, \pi_3) \in R$, we deduce $(h(a_0), h(b_0), h(c_0)) \in R$ as required.
%; as $f$ is a group homomorphism  we deduce that the expression~\eqref{eq:expre} evaluates to $h(a_0, b_0, c_0)$, as required.
Proposition~\ref{prop:unsolvable} is thus proved.

\ \\
\begin{remark}
Splitting the positions into positive and negative ones, with one more positive than negative ones, resembles the property of ability to count of~\cite{FV98}.
We believe that the proof can be modified to prove the equivalence: a coset template is not 2-Helly if and only if some its pp-definable extension has the ability to count. The latter property needs however to be slightly generalized to work in our setting, as the setting allows many different carrier groups. The equivalence is not a new result: it been shown recently for all templates in~\cite{BK14} .
\end{remark}

\paragraph{Acknowledgements}
The author is grateful to Szymek Toru{\'n}czyk for proposing a simplified proof of Lemma~\ref{lem:abelian}.
Moreover, thanks go to Bartek Klin, Ania Ochremiak, and Szymek Toru{\'n}czyk for long and
fascinating discussions on CSP and its relationship to computation in sets with atoms.
The author thanks also Luc Segoufin for encouraging me to write down this note.
Finally, thanks go to the anonymous reviewers for their insightful and helpful remarks.
\ \\

%% The Appendices part is started with the command \appendix;
%% appendix sections are then done as normal sections
%% \appendix

%% \section{}
%% \label{}

%% If you have bibdatabase file and want bibtex to generate the
%% bibitems, please use
%%
%\bibliographystyle{elsarticle-num} 
\bibliographystyle{plain} 
\bibliography{bib}

\begin{thebibliography}{10}

\bibitem{ABD09}
Albert Atserias, Andrei~A. Bulatov, and Anuj Dawar.
\newblock Affine systems of equations and counting infinitary logic.
\newblock {\em Theor. Comput. Sci.}, 410(18):1666--1683, 2009.

\bibitem{B14}
Libor Barto.
\newblock The collapse of the bounded width hierarchy.
\newblock {\em Journal of Logic and Computation}, 2014.

\bibitem{BK09}
Libor Barto and Marcin Kozik.
\newblock Constraint satisfaction problems of bounded width.
\newblock In {\em Proc. FOCS'09}, pages 595--603. IEEE Computer Society, 2009.

\bibitem{BK14}
Libor Barto and Marcin Kozik.
\newblock Constraint satisfaction problems solvable by local consistency
  methods.
\newblock {\em J. {ACM}}, 61(1):3, 2014.

\bibitem{BKL11}
Miko{\l}aj Boja{\'n}czyk, Bartek Klin, and S{\l}awomir Lasota.
\newblock Automata with group actions.
\newblock In {\em Proc. LICS'11}, pages 355--364, 2011.

\bibitem{BKLT13}
Miko{\l}aj Boja{\'n}czyk, Bartek Klin, S{\l}awomir Lasota, and Szymon
  Toru{\'n}czyk.
\newblock Turing machines with atoms.
\newblock In {\em Proc. LICS'13}, pages 183--192, 2013.

\bibitem{BD06}
Andrei~A. Bulatov and V{\'{\i}}ctor Dalmau.
\newblock A simple algorithm for {Malcev} constraints.
\newblock {\em {SIAM} J. Comput.}, 36(1):16--27, 2006.

\bibitem{CFI92}
{Jin-yi} Cai, Martin F{\"u}rer, and Neil Immerman.
\newblock An optimal lower bound on the number of variables for graph
  identifications.
\newblock {\em Combinatorica}, 12(4):389--410, 1992.

\bibitem{Cherlin96}
Gregory~L. Cherlin, Gary~A. Martin, and Daniel~H. Saracino.
\newblock Arities of permutation groups: Wreath products and k-sets.
\newblock {\em Journal of Combinatorial Theory, Series A}, 74(2):249 -- 286,
  1996.

\bibitem{dalmau}
V\'{\i}ctor Dalmau.
\newblock Generalized majority-minority operations are tractable.
\newblock {\em Logical Methods in Computer Science}, 2(4), 2006.

\bibitem{DL08}
V\'{\i}ctor Dalmau and Benoit Larose.
\newblock Maltsev + {Datalog} --$>$ symmetric {Datalog}.
\newblock In {\em Procs.~LICS}, pages 297--306. IEEE Computer Society, 2008.

\bibitem{fmt-book}
H.D. Ebbinghaus and J.~Flum.
\newblock {\em Finite Model Theory}.
\newblock Perspectives in mathematical logic. Springer, 1999.

\bibitem{F05}
Tom{\'{a}}s Feder.
\newblock Constraint satisfaction on finite groups with near subgroups.
\newblock {\em Electronic Colloquium on Computational Complexity}, 005, 2005.

\bibitem{FV98}
Tom{\'a}s Feder and Moshe~Y. Vardi.
\newblock The computational structure of monotone monadic {SNP} and constraint
  satisfaction: A study through {Datalog} and group theory.
\newblock {\em SIAM J. Comput.}, 28(1):57--104, 1998.

\bibitem{HM88}
D.~Hobby and R.N. McKenzie.
\newblock {\em The Structure of Finite Algebras}, volume~76 of {\em
  Contemporary Mathematics}.
\newblock AMS, Providence, R.I, 1988.

\bibitem{LZ07}
Benoit Larose and L{\'a}szl{\'o} Z{\'a}dori.
\newblock Bounded width problems and algebras.
\newblock {\em Algebra Universalis}, 56(3):439--466, 2007.

\end{thebibliography}

\end{document}